\documentclass[11pt]{article}

\usepackage[utf8]{inputenc}
\usepackage[T1]{fontenc}

\usepackage{microtype}

\usepackage[a4paper,margin=1in]{geometry}
\usepackage{amsmath}
\usepackage{amssymb}
\usepackage{mathrsfs}
\usepackage{textgreek}
\usepackage{graphicx}
\graphicspath{{./}{./figures/}{./figs/}{../}{../figures/}{../figs/}{../../}{../../figures/}{../../figs/}}
\usepackage{caption}
\usepackage{subcaption}
\usepackage{tikz-cd}
\usepackage{mathtools}
\usepackage{xspace}
\usepackage{tcolorbox}
\usepackage[normalem]{ulem}

\usepackage{xcolor}
\definecolor{darkgreen}{rgb}{0,0.4,0}
\definecolor{darkred}{rgb}{0.4,0,0}
\definecolor{darkblue}{rgb}{0,0,0.5}

\usepackage{hyperref}
\usepackage{doi}
\usepackage[capitalize, nameinlink]{cleveref}

\hypersetup{
    colorlinks=true,
    linkcolor=darkred,
    citecolor=darkgreen,
    filecolor=black,
    urlcolor=darkblue,}

\usepackage{authblk}
\usepackage{orcidlink}
\newcommand{\myorcid}[1]{{\small \orcidlink{#1}}}

\usepackage{enumitem}

\usepackage{algorithm}
\usepackage[noend]{algpseudocode} \makeatletter
\algnewcommand{\LineComment}[1]{\Statex \hskip\ALG@thistlm \(\triangleright\) #1}
\algnewcommand{\Statey}{\Statex \hskip\ALG@thistlm}
\makeatother

\algblockx[foreach]{ForEach}{EndForEach}
[1]{\textbf{for each} #1 \textbf{do}}
{\textbf{end}}

\usepackage{tabularray}

\usepackage{amsthm}
\usepackage{thmtools}
\usepackage{thm-restate}

\Crefname{remark}{Remark}{Remarks}
\Crefname{observation}{Observation}{Observations}

\theoremstyle{plain}
\newtheorem{theorem}{Theorem}[section]
\newtheorem{lemma}[theorem]{Lemma}
\newtheorem{proposition}[theorem]{Proposition}

\theoremstyle{definition}
\newtheorem{definition}[theorem]{Definition}

\theoremstyle{plain}

\theoremstyle{remark}
\newtheorem{remark}[theorem]{Remark}

\DeclareMathOperator{\poly}{poly}

\DeclareMathOperator{\dist}{dist}

\DeclareMathOperator{\ID}{ID}

\DeclarePairedDelimiter{\set}{\lbrace}{\rbrace}
\DeclarePairedDelimiter{\abs}{\lvert}{\rvert}
\DeclarePairedDelimiter{\card}{\lvert}{\rvert}

\DeclarePairedDelimiter{\ceil}{\lceil}{\rceil}

\newcommand{\calA}{\mathcal A}

\newcommand{\calL}{\mathcal L}

\newcommand{\bbN}{\mathbb{N}}

\newcommand{\bbR}{\mathbb{R}}
\newcommand{\bbZ}{\mathbb{Z}}

\newcommand{\naturals}{\bbN}

\newcommand{\reals}{\bbR}
\newcommand{\integers}{\bbZ}

\newcommand{\algoname}[1]{\textnormal{{\textsc{#1}}}}\newcommand{\CarefulWalk}{\algoname{CarefulWalk}}
\newcommand{\PathRulingSet}{\algoname{PathRulingSet}}
\newcommand{\EarlyStopPathRulingSet}{\algoname{ES-ColPathRulingSet}}
\newcommand{\ZWalk}{\algoname{ZWalk}}
\newcommand{\SearchingWalk}{\algoname{SearchingWalk}}

\newcommand{\lmin}{\ell_{\min}}
\newcommand{\lmax}{\ell_{\max}}

\newcommand{\rmout}{{\mathrm{out}}}
\newcommand{\Soutput}{S_\rmout}
\newcommand{\rmcand}{{\geq R}}
\newcommand{\Scand}{U^\rmcand}

\DeclareMathOperator{\care}{care}

\newcommand{\tstar}{{t^{*}}}

\newcommand{\Rstar}{{R^{*}}}

\newcommand{\Trdv}{{T_{\mathsf{rdv}}}}

\title{Optimal Deterministic Rendezvous in Labeled Lines}

\author[1]{Yann Bourreau \myorcid{0009-0001-1819-8348}}
\author[1]{Ananth Narayanan \myorcid{0009-0002-6137-4025}}
\author[2]{Alexandre Nolin \myorcid{0000-0002-3952-0586}}
\affil[1]{CISPA Helmholtz Center for Information Security, Saarbrücken, Germany}
\affil[2]{SAMOVAR, Télécom SudParis, Institut Polytechnique de Paris, Palaiseau, France}
\date{}

\begin{document}

\maketitle

\begin{abstract}
    In a \emph{rendezvous} task, a set of mobile agents initially dispersed in a network have to gather at an arbitrary common site.
    We consider the rendezvous problem on the infinite labeled line, with $2$ initially asleep agents, without communication, and a synchronous notion of time.
    Each node on the line is labeled with a unique positive integer.
    The initial distance between the two agents is denoted by $D$.
    Time is divided into rounds and measured from the moment an agent first wakes up.
    We denote by $\tau$ the delay between the two agents' wake up times.
If awake in a given round $T$, an agent at a node $v$ has three options: stay at the node $v$, take port $0$, or take port $1$. If it decides to stay, the agent will still be at node $v$ in round $T+1$. Otherwise, it will be at one of the two neighbors of $v$ on the infinite line, depending on the port it chose.
The agents achieve rendezvous in $T$ rounds if they are at the same node in round $T$.
We aim for a deterministic algorithm for this problem.

    The problem was recently considered by Miller and Pelc~\cite[Distributed Computing 2025]{MP_dc25}. With $\lmax$ the largest label of the two starting nodes, they showed that no algorithm can guarantee rendezvous in $o(D \log^* \lmax)$ rounds. 
    The lower bound follows from a connection with the LOCAL model of distributed computing, and holds even if the agents are guaranteed simultaneous wake-up ($\tau = 0$) and are told their initial distance $D$.
    Miller and Pelc also gave an algorithm of optimal matching complexity $O(D \log^* \lmax)$ when the agents know $D$, but only obtained the higher bound of $O(D^2 (\log^* \lmax)^3)$ when $D$ is unknown to the agents.

    In this paper, we improve this second complexity to a tight $O(D \log^* \lmax)$, closing the gap between the best known lower and upper bounds. In fact, our algorithm achieves rendezvous in $O(D \log^* \lmin)$ rounds, where $\lmin$ is the smallest label within distance $O(D)$ of the two starting positions.
    We obtain this result by having the agents compute sparse subsets of the nodes to gather at (formally, ruling sets over the line), as well as some general observations about the setting of rendezvous on labeled graphs.
\end{abstract}

\newpage
\setcounter{tocdepth}{2}
\tableofcontents
\newpage

\section{Introduction}

A classic task for a set of autonomous mobile agents is to meet at some location.
The agents might want to gather for a number of reasons: to exchange some resources, to initiate a long journey together, or to assemble in a specific shape.
With only two agents, the problem is known as \emph{rendezvous}~\cite{AG_book03}.

We consider a fully discrete version of this problem: time is divided into rounds, and in each round, each agent is located at a node of a graph $G$.  
Furthermore, we assume that each node $v$ of this graph has a unique label $\ell(v) \in \naturals^*$, and that the graph $G$ inhabited by the agents is an infinite line.
The agents wake up at possibly different rounds. Once awake, in each round, an agent can stay at its current location for one round or move to a node neighboring its current location.
Rendezvous is achieved if the agents are at the same node in some round. In particular, it does not suffice that the agents are in adjacent nodes, or cross the same edge in opposite directions: they lack a communication or sensing method to detect these situations.
The initial distance between the two agents is denoted $D$, and $\tau$ denotes the delay between the two wake-up times. 
The two agents execute the same algorithm, also known as the \emph{symmetric} setting.
We more formally describe our model in \cref{sec:preliminaries}.

\subsection{Our Results}

We give an optimal deterministic algorithm for the rendezvous problem on the infinite labeled line.

\begin{restatable}{theorem}{MainResultThm}
\label{thm:main-result}
    There is a deterministic algorithm for solving rendezvous on the infinite labeled line in $O(D\log^* \lmin)$ rounds from the wake-up time of the first agent, where $D$ is the initial distance between the two agents and $\lmin$ is the smallest label within distance $O(D)$ from the agents' starting locations.
\end{restatable}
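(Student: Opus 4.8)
\emph{Proof idea.}
The plan is to build the algorithm in three layers. The bottom layer is a subroutine that, for a given \emph{scale} $s$, lets a single agent compute---while walking and reading labels---a \emph{ruling set of the line at scale $s$}: a set $S_s$ of nodes, fixed by a global rule depending on the labelling alone, whose consecutive elements are at distance $\Theta(s)$ and which an agent located within $O(D)$ of the other agent can reconstruct in its own vicinity after $O(s\log^*\lmin)$ moves. The middle layer is a \emph{meeting procedure}: given $S_s$ for some $s\ge D$, it brings two agents at distance $\le D$ onto a common node within $O(s\log^*\lmin)$ rounds. The top layer is a wrapper that drops the two remaining assumptions---that the agents know $D$ and that they wake simultaneously---by folding all the scales into one expanding walk and interspersing it with a doubling idle schedule. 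Roughly, these layers correspond to the procedures \PathRulingSet{}/\EarlyStopPathRulingSet{}, \SearchingWalk{}, and \ZWalk{}/\CarefulWalk{}.

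For the bottom layer I would cut the line into consecutive blocks of $\Theta(s)$ nodes, obtaining a path $\calP_s$ whose vertices are the blocks, each carrying the distinct identifier ``smallest label inside the block''. Instead of running the full $\log^*$-round deterministic colour reduction on $\calP_s$ (which costs $\Theta(\log^*\lmax)$ rounds of $\calP_s$), I would use an early-stopping variant driven by \emph{beacons}: a block becomes a beacon as soon as its identifier drops below a tower function of the number of rounds elapsed, so a block of identifier $N$ becomes a beacon after only $O(\log^*N)$ rounds. Once a beacon lies within $O(1)$ blocks of an agent, the agent reads off a proper $O(1)$-colouring of its surroundings from the distances and directions to nearby beacons, and hence an MIS of $\calP_s$ and the canonical ruling set $S_s$, in $O(1)$ further rounds. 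The accounting point---and the source of the improvement from $\log^*\lmax$ to $\log^*\lmin$---is that the block containing $\lmin$ has identifier at most $\lmin$ and, since $s\ge D$, lies within $O(1)$ blocks of the agents, so it turns into a beacon after $O(\log^*\lmin)$ rounds of $\calP_s$. A delicate point is to make $S_s$ a single globally-consistent object that both agents nevertheless compute quickly in their own neighbourhoods, so that they agree on where it sits.

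For the middle layer, fix $s\ge D$. Then the two starting nodes lie in blocks $O(1)$ apart, so as vertices of the virtual line $S_s$ the agents are at constant virtual distance, and each knows the stretch of $S_s$ around it together with the real labels of those elements. What remains is a \emph{bounded-distance rendezvous on the labelled virtual line} $S_s$, which costs only $O(1)$ virtual rounds---hence $O(s)=O(D)$ further real rounds: first derive, from a $3$-colouring of $S_s$, a colouring with $O(1)$ colours in which any two virtual nodes at distance $O(1)$ get distinct colours; then run $O(1)$ phases, one per ordered pair of colours, where in phase $(a,b)$ every colour-$a$ agent parks on its node while every colour-$b$ agent sweeps the $O(1)$ virtual nodes around itself. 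Since the two agents have distinct colours, there is a phase in which one waits on a node that the other's sweep visits, so they coincide at some node in some round. I want to stress that the last step genuinely cannot be ``everyone walks to a canonically chosen node'': no rule local at radius $O(D)$ can agree on a single target for all pairs of positions at distance $\le D$, so the meeting has to be realised by park/sweep walks rather than by a shared computed destination.

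For the top layer the agent does not know $D$, so it runs one careful, monotonically expanding exploration that simultaneously plays the role of the scale-$s$ computation for every $s$ (moving to a larger scale reuses earlier work rather than restarting, which is what keeps the total at $O(D\log^*\lmin)$ instead of $O(D\log D)$), firing a one-off meeting attempt at each scale as soon as that scale's ruling set becomes computable nearby; the attempt at scale $s^{*}=\Theta(D)$ fires at round $O(D\log^*\lmin)$, and the attempts at the $O(\log D)$ smaller scales contribute a geometric sum that is still $O(D\log^*\lmin)$. The delay $\tau$ is absorbed by interleaving geometrically growing idle (parking) periods into this walk, together with the observation that, since an agent reaching a sleeping agent wakes it, either rendezvous already occurs within $O(D)$ rounds or $\tau=O(D)$; in the latter case a park period of length $\Theta(D)$ with a large enough constant suffices for the late agent's meeting attempt to land entirely inside an idle period of the early agent. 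I expect this top layer to be the main obstacle: forcing the two anonymous agents' scale choices and phase clocks to align under an unknown delay, without paying an extra $\log D$ factor or reverting to $\log^*\lmax$, is precisely what splits the wrapper into several cooperating walks and is the technical heart of the argument. Given the three layers, \cref{thm:main-result} follows by summing their costs.
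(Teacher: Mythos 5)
Your overall architecture (early-stopping landmark computation at a scale, colour-driven park/sweep search, doubling scales, and the dichotomy ``either rendezvous in $O(D)$ rounds or $\tau=O(D)$'') mirrors the paper's, but your bottom layer has a genuine gap. You ``cut the line into consecutive blocks of $\Theta(s)$ nodes'' and then work on the block-path $\calP_s$; on an infinite line with no origin and no common orientation, such a partition is not a well-defined object that both agents can agree on --- any locally computable, globally consistent set of block boundaries spaced $\Theta(s)$ apart \emph{is} a $\Theta(s)$-ruling set, i.e.\ exactly the object you are trying to construct. You flag this yourself (``a delicate point is to make $S_s$ a single globally-consistent object''), but offer no mechanism, and the beacon rule does not supply one: before any beacon appears there is nothing to anchor the blocks to, and ``distances and directions to nearby beacons'' are not shared data, since the agents have no common orientation. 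The paper avoids the circularity by never forming blocks: it computes the ruling set directly on the line, partitioning the \emph{nodes} by $\log^*$ of their labels into classes $V_i$, running an MIS-based routine (\PathRulingSet) on power graphs restricted to each class, and greedily merging the classes' outputs in order of increasing $i$; this is what makes the output globally consistent while letting a node of label $\ell$ commit after $O(R\log^*\ell)$ rounds (\cref{lem:earlystoppathrulingset}), which is the source of the $\lmin$ dependence.

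Second, what you yourself call ``the technical heart'' --- aligning the two agents' scale choices and phase clocks under unknown $\tau$ and unknown $D$ --- is acknowledged but not resolved, and it is where most of the paper's case analysis lives. Two agents running your wrapper can legitimately settle on \emph{different} scales at the same wall-clock time (their explored neighbourhoods contain different small labels), so ``the attempt at scale $s^{*}=\Theta(D)$'' is not a single synchronized event. The paper handles this by making every outer-loop iteration take a fixed $28L$ rounds regardless of what was computed (\cref{prop:phase-runtimes}), so the two agents' phase-$L$ windows are offset by exactly $\tau\leq 10D\leq L$; by restricting landmark scales to powers of $4$, so that mismatched scales differ by a factor of at least $4$ and the larger-scale agent's \ZWalk\ swallows the other agent's entire excursion (\cref{lem:mismatched-rendezvous}); and by colouring the ruling set so that members within distance $9R-1$ receive distinct colours, which settles the remaining case of equal scales but different landmark nodes (\cref{lem:same-ruling-node-rendezvous,lem:distinct-nodes-same-set-rendezvous}). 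Without some mechanism of this kind your top layer does not yet constitute a proof. (A minor point: no ``reuse across scales'' is needed to avoid a $\log D$ loss --- the per-scale costs form a geometric series, and the paper simply restarts from scratch at each $L$.)
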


\Cref{thm:main-result} improves upon a prior algorithm of complexity $O(D^2 (\log^* \lmax)^3)$ by Miller and Pelc~\cite{MP_dc25}, where $\lmax$ is the largest label of the two starting nodes of the agents.
To a lesser extent, we also improve upon their $O(D \log^* \lmax)$ algorithm that assumes that the agents know their initial distance $D$, by replacing their dependency in $\lmax$ by one in $\lmin$.

The optimality of our algorithm follows from a lower bound from the same paper by Miller and Pelc~\cite{MP_dc25}.
They proved that any algorithm running in $o(D \log^* \lmax)$ rounds would imply an $o(\log^*n)$ algorithm for $3$-coloring paths and cycles of $\poly(n)$ nodes in the LOCAL model.
A seminal result by Linial states that no such $o(\log^*n)$ algorithm exists~\cite{linial92}.
Note that our algorithm, while of complexity $o(D \log^* \lmax)$ if considering an infinite family of instances where $\lmin \in o(\lmax)$, is compatible with this lower bound, as families of instances where $\lmin \in \Theta(\lmax)$ can be constructed.
That is, our algorithm is only of complexity $o(D \log^* \lmax)$ on specific instances, not in general\footnote{Throughout this paper and as was the case in the paper by Miller and Pelc, we take a slightly non-standard definition of the $\log^*$ function guaranteeing that $\log^*(x)\geq 1$ for all $x$. This is to clarify that our $O(D \log^* \lmin)$ algorithm does not finish in $0$ rounds when $\lmin = 1$, since $\log^*(1) = 0$ with the most common definition of $\log^*$. See the Preliminaries (\cref{sec:preliminaries}).}.

Our algorithm for the infinite line immediately implies a similar result for finite paths and cycles.
\begin{restatable}{corollary}{FiniteGraphThm}
\label{thm:finite-graph-main}
    There is a deterministic algorithm for solving rendezvous on finite labeled paths and cycles in $O(\min(n,D\log^* \lmin))$ rounds from the wake-up time of the first agent, where $D$ is the initial distance between the two agents, $\lmin$ is the smallest label within distance $O(D)$ from the agents' starting locations, and $n$ is the number of nodes in the graph.
\end{restatable}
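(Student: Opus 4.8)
The plan is to pair the infinite-line algorithm $\mathcal A$ of \cref{thm:main-result} with a self-contained $O(n)$-round routine for small finite graphs, and to interleave the two with geometrically growing budgets, so that the agents automatically pay $O(\min(n, D\log^*\lmin))$ without ever needing to know $n$, $D$, or $\lmin$ in advance.

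First I would pin down how $\mathcal A$ behaves when executed verbatim on a finite path or cycle. Since $\mathcal A$ achieves rendezvous within $O(D\log^*\lmin)$ rounds, each agent stays within distance $O(D\log^*\lmin)$ of its starting node throughout. On a cycle with $n$ nodes, every arc spanning fewer than $n$ vertices is an honest finite segment of a labeled line; hence, whenever $n$ exceeds a suitable constant times $D\log^*\lmin$, both trajectories remain inside one such arc and $\mathcal A$ runs exactly as it would on a labeled line extending that arc, so it achieves rendezvous in $O(D\log^*\lmin)$ rounds. On a path the same holds once $n$ is large enough that no agent ever reaches a degree-$1$ node: the execution then coincides with the one on the labeled line obtained by appending fresh, arbitrarily large labels beyond the two endpoints, and this extension does not change $\lmin$, the smallest label within distance $O(D)$ of the agents, so rendezvous again occurs in $O(D\log^*\lmin)$ rounds. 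For small $n$ nothing is claimed about $\mathcal A$; that regime is handled by the second ingredient.

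Next I would spell out a \emph{map-and-meet} routine that always terminates with rendezvous in $O(n)$ rounds. An agent leaves its start through port $0$, and afterwards always leaves a node through the port by which it did not enter, i.e.\ it walks straight ahead. On a path it reaches a degree-$1$ node within $n-1$ steps, recognizes it as an endpoint, reverses, and walks across to the other endpoint within $n-1$ further steps, recording every label it sees; it then holds the full labeled graph together with its own location in it. On a cycle it instead re-encounters the (unique) label of its starting node after exactly $n$ steps, which likewise reveals the entire cyclic structure. In both cases the agent computes the node $v^{\star}$ of minimum label, walks there within $n-1$ further steps, and waits; both agents compute the same $v^{\star}$, so once both have arrived they are co-located, and the wake-up delay is absorbed exactly as in \cref{thm:main-result}.

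Finally I would glue the pieces by a doubling argument. In phase $i = 1, 2, \dots$ with step budget $2^i$, each agent first runs map-and-meet for $2^i$ steps, retracing its logged moves back to the start if it has not finished, and then runs $\mathcal A$ for $2^i$ rounds, again retracing afterwards; both agents follow this identical deterministic schedule, so they begin each phase afresh at their initial positions, and the total time through phase $i$ is $O(2^i)$. If $n$ is at least a suitable constant times $D\log^*\lmin$, the $\mathcal A$ sub-phase produces rendezvous once $2^i = \Theta(D\log^*\lmin)$, i.e.\ by round $O(D\log^*\lmin) = O(\min(n, D\log^*\lmin))$; otherwise $\min(n, D\log^*\lmin) = \Theta(n)$ and the map-and-meet sub-phase produces rendezvous once $2^i = \Theta(n)$, i.e.\ by round $O(n)$. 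Either way rendezvous occurs within $O(\min(n, D\log^*\lmin))$ rounds, which is \cref{thm:finite-graph-main}. The one genuinely delicate step is the first: certifying that running $\mathcal A$ unchanged on the finite graph cannot misfire when $n$ is large. For cycles this is immediate; for paths it comes down to observing that the agents' horizon $O(D\log^*\lmin)$ is, for a small enough hidden constant, strictly below the distance to either endpoint, so the degree-$1$ nodes are never visited and $\mathcal A$ needs no special-casing at all.
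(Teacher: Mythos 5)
Your proposal diverges from the paper's proof (which never interleaves two routines: it runs the infinite-line algorithm unchanged and only adds new behavior at the moment an agent itself detects finiteness, by hitting a degree-$1$ node or closing a cycle), and it contains two genuine gaps. First, the claim that ``$n$ large enough'' guarantees that no agent ever reaches a degree-$1$ node is false: what keeps an agent away from an endpoint is the distance from its starting node to that endpoint, not $n$. On a path with $n \gg D\log^* \lmin$ the agents may still start three hops from an endpoint, in which case an early \ZWalk\ of $\mathcal{A}$ reaches the degree-$1$ node, and your $\mathcal{A}$-sub-phase is then not even well defined (the prescribed move may not exist), let alone covered by \cref{thm:main-result}. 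The paper handles exactly this situation by its rule ``upon seeing a degree-$1$ node, walk straight away from it'' (cf.\ also \cref{rem:endpoint-distance}); your scheme has no analogous provision, since endpoints are only treated inside the map-and-meet routine, not inside the $\mathcal{A}$ slices.

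Second, the gluing step is not automatic. The guarantee of \cref{thm:main-result} is proved under the assumption that the other agent is, at every round, either asleep at its start or executing $\mathcal{A}$ continuously with a wake-up offset $\tau$; similarly your map-and-meet guarantee presumes the other agent follows map-and-meet or sleeps. Under time-slicing this assumption fails: while agent $\alpha$ runs its $\mathcal{A}$ slice of phase $i$, agent $\beta$ (offset by $\tau$) may be in the middle of its map-and-meet slice, i.e.\ up to $2^i \gg D$ away from $v_\beta$ in an adversarially chosen direction, or retracing. For $\tau$ comparable to the phase length the two agents' $\mathcal{A}$ slices can be almost disjoint in time, and then neither the out-of-sync argument (\cref{lem:out-of-sync-rendezvous}, which needs $\beta$ confined near $v_\beta$ during $\alpha$'s sweep) nor the in-sync arguments (\cref{lem:mismatched-rendezvous,lem:same-ruling-node-rendezvous,lem:distinct-nodes-same-set-rendezvous}, which need $T_{\beta,L}=T_{\alpha,L}+\tau$ with both agents inside their searching phases) applies, so the assertion ``the $\mathcal{A}$ sub-phase produces rendezvous once $2^i=\Theta(D\log^*\lmin)$'' is unsupported. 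Even in the favorable regime $\tau<10D\ll 2^i$, transferring the infinite-line guarantee to a slice requires checking that the meeting exhibited by the analysis occurs after round $\tau$ of the simulated execution (so that both interleaved trajectories coincide with a genuine run of $\mathcal{A}$); this happens to be true for the paper's in-sync lemmas but you never argue it. (A smaller omission: your map-and-meet has agents walking straight toward each other on a path, which without wrapping the whole combined algorithm in \cref{thm:no-edge-crossing} can end in an undetected simultaneous edge crossing.) The paper's construction sidesteps all of this because the finite-graph behavior is triggered only by information the unmodified infinite-line execution already gathers, so its analysis for the infinite line applies verbatim until detection, and after detection rendezvous follows from a whole-graph sweep or from gathering at the minimum-label node.
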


Like our algorithm for the infinite line, this improves upon previous results of complexity $O(\min(n,D^2 (\log^* \lmax)^3))$ and $O(\min(n,D \log^* \lmax))$ by Miller and Pelc~\cite{MP_dc25}.

\subsection{Technical Overview}
\label{sec:tech-overview}

\paragraph{Rendezvous and treasure hunt.}
To get some intuition on how rendezvous algorithms typically work, it is instructive to consider one of the simplest variants of the problem: the asymmetric setting with simultaneous startup.
Asymmetry means that each agent can execute its own strategy, or equivalently, one agent receives label $1$ while the other receives label $2$.
Simultaneous startup means that the two agents wake up in the same round.

These assumptions allow for the following strategy: one agent goes through successive phases exploring increasingly bigger intervals centered on its initial node, while the other agent waits at its initial node. Having the active agent explore intervals doubling in size between phases, the strategy achieves rendezvous in $O(D)$ rounds, which is optimal up to the hidden constant and lower order terms.
Rendezvous reduces to \emph{treasure hunt} in that scenario.

While this algorithm cannot be used as is in harder settings, our goal will be to somehow reduce to this simple strategy.
In the symmetric setting but with access to randomness, for example, the agents can flip coins to decide whether to search or stay put during some time intervals.
In all likelihood, the agents should regularly sample opposite roles.
Similarly, if the agents are given distinct labels (but importantly, not restricted to the set $\set{1,2}$), distinct behaviors can still be constructed from the given identifiers. The agents might for example alternate searching and inactive phases according to the binary representations of their labels.
More generally, we need the agents to \emph{break symmetry} between themselves, so that at some point during the execution, one of the two agents does a search while the other agent stays idle. Given that one agent's idle phase overlaps sufficiently with the other agent's searching phase, the two agents will achieve rendezvous.

\paragraph{How knowledge of $D$ helps.}
Having the agents know the initial distance $D$ between them drastically reduces the space of possibilities.
For instance, from the perspective of either agent, there are only two possibilities for the starting location of the other agent: the two nodes at distance exactly $D$ from its own initial node. 
If each agent spends its first $3D$ rounds to go to these two nodes, the first awake agent always finds the other agent in these $3D$ rounds, granted that the delay $\tau$ between the agents' wake-up times is at least $3D$.
Not meeting the other agent in these initial $3D$ rounds thus immediately implies a bound on the delay, $\tau < 3D$.

On the infinite labeled line, consider now the subset of nodes whose distances to the starting location of one of the agents are divisible by $D$.
The resulting set of nodes is the same whether we measure the distances from either of the two starting locations, as well as contains them.
Connecting each member of this set to the two other members closest to it on the original line, we obtain another infinite labeled line.
A possible strategy for the two agents to break symmetry is to agree on a coloring of this sub-line, and to act according to the colors of their starting nodes.
As the agents perform searching phases interspersed with waiting phases according to these colors, knowing $D$ allows the agents to limit their phases to intervals of $O(D)$ rounds.

\paragraph{Coloring an infinite line.}
Without additional assumptions, the agents cannot truly agree on a coloring of the whole sub-line. Indeed, the agents can only base the coloring of the sub-line on the labels of the nodes of the infinite line.
In order for the agents to be able to compute the color of a node, said color should only depend on a finite set of labels.
As the agents cannot explore more than $T$ nodes in $T$ rounds, for the sake of speed, the colors given to the agents' starting nodes should ideally be computable from labels in relative proximity to them.

At a high level, we want a mapping from some universe of possible local neighborhoods to colors, such that if applying this mapping to the infinite line, a valid coloring is obtained.
This idea of constructing a mapping between local neighborhoods and a node's local output almost corresponds to the LOCAL model of distributed computing.
A key difference, however, is that the standard LOCAL model assumes a finite graph.

Still, it is easy to adapt a seminal LOCAL algorithm for $3$-coloring paths and cycles to obtain a useful mapping.
To do so, the algorithm is transformed into an \emph{early-stopping} one, which has nodes terminate earlier depending on their given labels. While the classic LOCAL algorithm for $3$-coloring paths and cycles completes in $O(\log^* n)$ rounds, assuming that the nodes are given unique identifiers between $1$ and $\poly(n)$, the early-stopping version instead guarantees that a node $v$ of label $\ell(v)$ permanently commits to a color after $O(\log^* \ell(v))$ rounds.
From the point of view of mapping local views to colors, this means that there exists a mapping that infers the color of a node of label $\ell(v)$ from its distance-$O(\log^* \ell(v))$-neighborhood.
Used on the infinite sub-line obtained by taking one out of every $D$ nodes, this allows the two agents to assign distinct colors to their starting nodes after $O(D \log^* \lmax)$ rounds of exploration, where $\lmax$ is the largest of the starting locations' labels. This roughly summarizes the approach of Miller and Pelc~\cite{MP_dc25}.

\paragraph{Limitations of the regular sub-line approach.}
The approach sketched in previous paragraphs is very dependent on the assumption that the agents have knowledge of $D$, as it has the agents essentially ignore all but a specific $1/D$ fraction of the line.
If the agents run the same algorithm with a wrong ``guess'' for $D$, the two agents are effectively not synchronizing in any way, as the sub-lines they each construct are only the same if the agents' guess for $D$ is a divisor of the actual distance $D$.
Even when their guess divides $D$, the agents' starting nodes are only adjacent in the sub-line if the guess is exactly $D$.

As a result, to tackle the setting without knowledge of $D$ by using the previous algorithm with different guesses for $D$ basically requires to test \emph{every possible value} for $D$.
This is true even if the agents are given an upper bound on $D$.
This difficulty is the source of the $O(D^2)$ dependency in the earlier algorithm by Miller and Pelc~\cite{MP_dc25}.

\paragraph{Limitations of a purely coloring-based approach.}
For simplicity, let us assume for now that the agents are given an upper bound on $D$. In the paper, we will see that handling a fully unknown $D$ can be handled relatively easily by essentially considering geometrically increasing bounds on $D$.
As a first try, let us aim for the agents to agree on a common coloring of the line that ensures that their starting locations receive distinct colors.

With only an upper bound on $D$, the agents now need to break symmetry with $2D$ nodes instead of just $2$.
Guaranteeing that any two nodes within distance $D$ from one another are colored differently cannot be done with $3$ colors.
Pushing forward with this approach, one could color the line with $O(D)$ colors.
But now, the increased number of colors necessitates to design the same number of distinct behaviors of alternating searching and waiting phases guaranteeing rendezvous.
This higher number of behaviors is likely to lead to a $D \log D$ dependency in the distance, similar to how in anonymous networks with labeled agents, larger labels lead to higher complexity in this fashion\footnote{For instance, with $\ell$ the smallest of the two agents' labels, $D$ their initial distance, and simultaneous startup, the complexity of rendezvous is $\Theta(D \log \ell)$ in the anonymous ring \cite{DFKP_algorithmica06}.}.
This would be far from the $\Omega(D\log^* \ell)$ lower bound.

\paragraph{Our solution: ruling set nodes as local landmarks.}
To reduce the number of behaviors we have to design, and thus, improve on the complexity that would follow from an approach based on computing a distance-$D$ coloring, we have the agents agree on a $1/\Theta(D)$ fraction of the nodes that act as common milestones, or landmarks.
One way to understand our algorithm is that we have the agents \emph{build their common sparse sub-line} instead of receiving it from the get-go as when they know $D$.
Formally, the agents compute a \emph{ruling set} over the line: a set of nodes such that every two of its members are at least at some minimum distance $\Theta(D)$ from one another, and every node in the graph is within some maximum distance $\Theta(D)$ from the nearest member.
Computation of this ruling set by the agents corresponds, again, to a computation in the early-stopping LOCAL model.

To achieve rendezvous, each agent effectively substitutes its starting position with the nearest node on the sub-line.
By having the agents agree on a common coloring of this sub-line, as before, the agents can now find each other by performing some waiting and searching phases based on the colors of their location in the sub-line.
Even if our new sub-line is less structured than the one considered in the setting with knowledge of $D$, its properties are sufficient to be able to give distinct colors to members of the sub-line at distance $O(D)$ in the original line, while only using $O(1)$ colors.

Computing the ruling set we need in the early-stopping LOCAL model follows relatively straightforwardly from standard results in distributed computing.
We perform this computation by repeated applications of an algorithm for Maximal Independent Set (MIS).
At a high level, we start from a set containing all nodes of the lines and gradually remove nodes from it. Each call to the MIS subroutine roughly doubles the minimum distance between members of the set, while maintaining that the maximum distance of any node on the graph to the nearest member of the set increases only moderately.
A small subtlety is that we have each node mostly coordinate with other nodes whose labels are of the same order of magnitude.

\paragraph{A simplifying remark: avoiding simultaneous edge crossings.}
Contributing to the overall simplicity of our algorithm is a subroutine for crossing edges that prevents the agents from exchanging positions on the line without achieving rendezvous.
The idea behind it is very simple: on the labeled line, every edge can be given a canonical orientation, for example, from lower to higher ID.
By having the agents perform a different set of moves to cross an edge depending on the direction in which they take it, we can make sure that the agents must be at the same node at some point during the crossing.
The first few moves are the same for both orientations of the edge, which enables an agent to discover the orientation of the edge while crossing it -- e.g., on the labeled line, through learning the identifier of the other endpoint.
Replacing every one-round edge crossing by this procedure only incurs a multiplicative factor of $4$ in the complexity of the algorithm.
We believe that this observation might be useful to the design of future rendezvous algorithms on graphs whose edges can easily be given an orientation.

\subsection{Related Work}

Most relevant to the results of this paper is the work by Miller and Pelc~\cite{MP_dc25}, that considered the rendezvous problem with the exact same set of assumptions as we do in this paper, as well as some easier settings.
They showed that when agents know the distance $D$ between themselves, and the maximum labels of the two starting nodes of the agents is $\ell$, deterministic rendezvous can be done in $O(D \log^* \ell)$ rounds. They gave a matching lower bound of $\Omega(D \log^* \ell)$, shown on a line whose IDs all have the same $\log^* \ell$. The lower bound holds even with simultaneous startup. The upper bound does not assume this.
When $D$ is unknown, they give an upper bound of $O(D^2 (\log^* \ell)^3)$.
They also gave an algorithm to achieve rendezvous in $O(D)$ rounds when the agents know their positions on the infinite line, improving upon an earlier result by Collins et al.~\cite{CollinsCGKM11} which additionally assumed simultaneous startup of the agents.

On an anonymous line, where the agents themselves are given two distinct labels, the problem is known to be harder. Dessmark et al.~\cite{DFKP_algorithmica06} showed that the problem has complexity $\Theta(D \log \ell)$, assuming simultaneous startup, where $\ell$ is the smallest of the two agents' labels. With arbitrary startup, they consider a harder setting where the agents only appear in the graph once they are awake. 
With that harder assumption, the algorithms tend to have a dependency on $n$, the size if the graph (which is then not assumed to be infinite in this work). 
Rendezvous in infinite lines was first studied in~\cite{MARCO2006315}, in which the authors considered the harder asynchronous setting, but with the agents capable of detecting each other in edges.

\newcommand{\asFrom}[1]{\scalebox{0.5}{(#1)}}

\newcommand{\assumpFormat}[1]{{\sffamily #1}\xspace}
\newcommand{\asLin}{\assumpFormat{Lin}}
\newcommand{\asFin}{\ensuremath{n{\ll}\infty}\xspace}
\newcommand{\asNL}{\assumpFormat{NL}}
\newcommand{\asAL}{\assumpFormat{AL}}
\newcommand{\asSWU}{\assumpFormat{SWU}}
\newcommand{\asDbA}{\assumpFormat{DbA}}
\newcommand{\asDtO}{\assumpFormat{DtO}}
\newcommand{\asOL}{\assumpFormat{OL}}
\newcommand{\asPtO}{\assumpFormat{PtO}}
\newcommand{\asSoO}{\assumpFormat{SoO}}
\newcommand{\asEwA}{\assumpFormat{EwA}}
\newcommand{\asLoG}{\assumpFormat{LoG}}
\newcommand{\asNumber}{8}
\newcommand{\totalCols}{10}

\begin{table}[bt]
    \centering
\begin{tblr}[
        caption = {Overview of previous results},
label = {table:comparison}
    ]{
        colspec = { *\asNumber{X[1,c]} *2{c}},
        row{1} = {font=\large\bfseries\sffamily}, hspan=minimal, rowhead = 1, cell{1}{1} = {c=\asNumber}{c}, cell{11-11}{1} = {c=\totalCols}{j,cmd=\small},
        hline{1,3,11}={2pt},
        hline{2,4-11}={1pt},
        vline{1,Z} = {1-10}{2pt},
        vline{2-Y} = {1-10}{1pt},
        row{10} = {bg=black!10!white}
      }
      Assumptions & & & & & & & & \SetCell[r=2]{c} Bound & \SetCell[r=2]{c} Citation
      \\
      \asLin    & $\mathclap{\asFin}$     & \asNL      & \asAL      & \asEwA     & \asSWU     & \asDbA     & \asLoG     &
      \\
      \checkmark & \checkmark & \checkmark & \checkmark & \checkmark & \checkmark & \checkmark & \checkmark &
      $\Omega(D)$ & folklore
      \\
      \checkmark & \checkmark &            & \checkmark &            & \checkmark &             &             &
      $\Theta(D \log \ell)$ & \cite{DFKP_algorithmica06}
      \\
      \checkmark &            & \asFrom{\asLoG} & \asFrom{\asLoG}     & \checkmark & \checkmark &            & \checkmark &
      $O(D)$ & \cite{CollinsCGKM11}
      \\
      \checkmark &            & \asFrom{\asLoG} & \asFrom{\asLoG} & \checkmark &            &            & \checkmark &
      $O(D)$ & \cite{MP_dc25}
      \\
      \checkmark &            & \checkmark & \asFrom{\asNL} & \checkmark & \checkmark & \checkmark &             &
      $\Omega(D\log^* \ell)$ & \cite{MP_dc25}
      \\
      \checkmark &            & \checkmark & \asFrom{\asNL} & \checkmark &           & \checkmark &             &
      $\Theta(D\log^* \ell)$ & \cite{MP_dc25}
      \\
      \checkmark &  & \checkmark & \asFrom{\asNL} & \checkmark &  &  &             &
      $O(D^2(\log^* \ell)^3)$ & \cite{MP_dc25}
      \\
      \checkmark &  & \checkmark & \asFrom{\asNL} & \checkmark &  &  &             &
      $\Theta(D \log^* \ell)$ & \textbf{this paper}
      \\
      Acronyms for assumptions: 
      \asLin (Graph is Line-like): the graph has maximum degree $2$. \asFin (Finite Graph): when unchecked, the graph can be infinite.
      \asNL (Node Labels): each node is equipped with a unique label. Implied by \asLoG, and implies \asAL.
      \asAL (Agent Labels): the agents are each given a unique label. Implied by \asLoG and \asNL.
      \asEwA (Exists while Asleep): rendezvous is also achieved if an awake agent finds an asleep agent. 
      \asSWU (Simultaneous Wake-Up): the agents wake up in the same round ($\tau = 0$).
      \asDbA (Distance between Agents): the agents know the initial distance $D$ between their starting locations.
      \asLoG (Location on Graph): the graph has some global naming scheme, which is known to the agents, and in which they know where they are. Implies \asNL and \asAL.
\end{tblr}
    \caption{Comparison of our result with some of the most relevant works in the literature.}
    \label{tab:enter-label}
\end{table}

Before they were formulated as questions of computational complexity and algorithm design, coordination problems of agents trying to meet appeared in the economics literature~\cite{book_Schelling}.
Algorithms for rendezvous in graphs were first studied by Alpern~\cite{S0363012993249195}.
Some of the graph structures in which deterministic rendezvous has been considered include the infinite tree \cite{det_rv_trees}, the infinite grid \cite{det_rv_grid}, the plane \cite{det_rv_plane} and the ring \cite{KranakisSSK03}.
Deterministic rendezvous was also studied in general graphs~\cite{Ta_ShmaZ14}, and in a setting where agents are given additional information about their location in the graph~\cite{CollinsCGKM11}.
Rendezvous on the line has also been studied with random resources~\cite{S036301299427816X,S0363012996314130}.

The $\Omega(D \log^* \ell)$ lower bound obtained in \cite{MP_dc25} comes from a reduction of the rendezvous problem in labeled lines to the problem of coloring nodes on a line in the LOCAL model with some number of colors $C\geq 3$. This problem has been shown to have a $\Omega(\log^*n - \log^* C)$ lower bound \cite{linial92,LS_podc14_ba}.
This lower bound is tight, due to an earlier algorithm by Cole and Vishkin matching its complexity~\cite{ColeVishkin_iandc86}.
Many works consider a version of the LOCAL model in which the nodes know an upper bound $n$ on the number of nodes, and have IDs between $1$ and $\poly(n)$. 
A generic way to remove this type of global knowledge was developed in a work by Korman, Sereni, and Viennot~\cite{KormanSV_dc13}.
Several ideas are transferable from the setting where $n$ is unknown to that of infinite graphs.

Readers interested in a broader view of the area of deterministic rendezvous might be interested in a recent survey by Pelc~\cite{Pelc_survey_lncs19}.
The book by Alpern and Gal~\cite{AG_book03} offers an introduction to the area of rendezvous that includes many topics outside the scope of this paper (asymmetric agents, continuous versions of the problem).
For an introduction to the LOCAL model, see the book by Peleg~\cite{book_Peleg}, as well as the online book by Hirvonen and Suomela~\cite{HS_da2020book}.

\subsection{Organization of the Paper}

We start with some preliminaries in \cref{sec:preliminaries}, where we introduce the notations and the model we are going to be considering throughout the paper.

In \cref{sec:subroutines}, we present the techniques and algorithms that the agents are going to be using to break symmetry between them and achieve rendezvous. \cref{sec:carefulwalk} shows how to devise a way for the agents to meet during the execution of the algorithm when they both move through the same edge at the same time but from different starting nodes.
In \cref{sec:pathrulingset}, we build an efficient algorithm to compute ruling sets on paths in LOCAL. Then, using this algorithm, we obtain in \cref{sec:earlystop-pathrulingset} an efficient algorithm for computing ruling sets on paths in LOCAL in which nodes with smaller IDs terminate earlier. 

\Cref{sec:our-algorithm} describes an optimal algorithm for the agent to reach rendezvous in our setting and analyses its complexity. \Cref{sec:algorithm-description} presents the algorithm computed by the agents. To prove that the agents achieve rendezvous, we focus on a specific phase of the algorithm of one of the agents. We show that after reaching this phase the agents will achieve rendezvous before switching to the next phase.
We split the analysis according to whether the agents are in the same phase or not. We argue about the two agents achieving rendezvous when in different phases in \cref{sec:distinct-phase-rdv}, and when in the same phase in \cref{sec:same-phase-rdv}.
We additionally show that the phase of our algorithm for which our arguments hold is guaranteed to occur within our stated upper bound, and finish the proof of our optimal upper bound in \cref{sec:algorithm-analysis}.

We give in \cref{sec:other-results} the implications of our results for line-like graphs other than the infinite line: finite lines, finite cycles, and the infinite half-line (containing a single node of degree $1$).

\section{Preliminaries}
\label{sec:preliminaries}

\subsection{Notation}
Throughout the paper, $\log$ refers to the base-$2$ logarithm, i.e., $\log(2^x) = x$ for any $x \in \reals$.
For each $x \in \reals$, the $\log^*$ function is defined as $\log^*(x)=1$ if $x \leq 1$, $\log^*(x)=1+\log^*(\log(x))$.
This is off-by-one from the maybe more common definition that sets $\log^*(x)=0$ if $x \leq 1$.
We do so to have that $\Theta(D\log^* \ell)$ is of order $\Theta(D)$ rather than $0$ when $\ell = 1$.

\paragraph{Graph notation.}
We write $G=(V,E)$ for a graph over vertices $V$ and edges $E$.
Each node $v \in V$ is equipped with a unique label or identifier $\ell(v) \in \calL$.
We identify the set of labels with the positive integers, $\calL = \naturals^*$.
We partition the nodes according to their labels and the $\log^*$ function, defining 
$V_i = \set{v \in V: \log^*(\ell(v)) = i}$, so that $V = \bigsqcup_{i \geq 1} V_i$. We also define $V_{\leq i} = \bigsqcup_{j = 1}^i V_j$ and $V_{<i} = V_{\leq i} \setminus V_i$.
The degree of a node $v \in V$ is denoted by $\deg(v)$, the distance $\dist(v,v')$ between two nodes corresponds to the (edge-)length of a shortest path between them, and $N(v)$ is the set of neighbors of $v$. For every integer $k$, $N^k(v) = \set{u \in V : \dist(u,v) \in [1,k]}$ is the $k$-hop neighborhood of $v$.

For $U \subseteq V$ a subset of the nodes of a graph $G = (V,E)$, the graph $G[U] = (U,E')$ is the graph induced by $U$, whose set of nodes is $U$ and whose set of edges is $E' = \set{vv' \in E: v \in U \wedge v' \in U}$.
For a positive integer $k\geq 1$, the power graph of order $k$ is the graph $G^k=(V,E'')$ with the same set of nodes as $G$ and whose set of edges is $E'' = \set{vv': \dist_G(v,v') \in [1,k]}$.
Note that $G^k[U] = (G^k)[U]$ rather than $(G[U])^k$, i.e., $G^k[U]$ is an induced subgraph of a power graph of $G$, rather than a power graph of an induced subgraph of $G$.

\paragraph{The Labeled Infinite Line.}
An infinite labeled line is a $2$-regular connected graph $G=(V,E)$ over a countably infinite set of nodes.

For the analysis' sake, it is sometimes convenient to consider a canonical orientation and naming for the line.
When we do so, we refer to nodes on the infinite line as relative integers, s.t.\ for any $i \in \integers$, node $v_i$ is adjacent to nodes $v_{i-1}$ and $v_{i+1}$ on the line.

\subsection{Computational Models}

\paragraph{Our Mobile Agent Model.}
The agents are placed and move in a network modeled by a graph $G=(V,E)$. $G$ can be finite or countably infinite. Each node $v \in V$ of the graph has $\deg(v)$ ports, and each edge is the graph is matched to exactly one port of each of its two endpoints.
The ports of a node $v$ are labeled $0$ to $\deg(v)-1$.
At a node $v$, an agent can read the node's label $\ell(v)$ if the graph is labeled, see the node's degree $\deg(v)$, and distinguish the $\deg(v)$ ports leading out of the node.

The agents execute the same deterministic algorithm, as in, when put in the same configuration, agents $\alpha$ and $\beta$ perform the same set of moves.
Time is divided into discrete rounds.
The agents are initially asleep, and upon wake-up (or startup), they have no notion of how much time has elapsed before. Formally, the agents have an internal clock that is initialized at $0$ upon waking-up, and from there on keeps track of how many rounds passed since wake-up. We denote by $\tau$ the delay between the wake-up times of the two agents. The agents can be thought of as waking up in rounds $0$ and $\tau$.
The initial distance between the two agents is denoted by $D$.

In each round, an awake agent at a node $v$ chooses one of $\deg(v)+1$ possible moves. One of these moves is staying at $v$ for one round, the other moves being to take one of the ports leading out of $v$.
When an agent takes a port, in the next round, it arrives at the other endpoint of the edge behind that port. When entering a node, an agent knows the port it enters it from.
We put no limit on the memory of the agents, and they can base each of their decisions on everything they have experienced in previous rounds: the labels of all the nodes they were at in previous rounds, and from which port they were all entered from.

The agents can detect each other if they are at the same node in the same round, in which case they achieve rendezvous. 
This is also the case if one of the agents is asleep, i.e., agents exist in the network before waking up.
They have no other means of communication.
By default, the agents do not detect each other if they cross the same edge in opposite directions in the same round. We consider agents able to detect simultaneous edge-crossing in some parts of the paper.

The agents have no shared notion of orientation of the line, no idea of their position on the line with respect to some shared point of origin, and have no knowledge of both $D$ and $\tau$.

\paragraph{The LOCAL Model on Infinite Graphs.}
The standard LOCAL model is a model of distributed computation in which a network is represented as a graph $G=(V,E)$ whose nodes possess computational power.
Every node has a unique identifier, typically assumed to be an integer between $1$ and $\poly(n)$, where $n$ is the number of nodes in the graph.
The time is measured in rounds, in which a node can receive the messages sent by its neighbors in the previous round, do some internal computation, and send a message to each of its neighbors for them to receive in the next round.
Computation starts at the same time for all nodes and is done synchronously.
An algorithm's complexity on instances of size $n$ is the maximum number of rounds before all nodes in the graph have terminated executing the algorithm and permanently committed to an output.
The complexity of a problem on instances of size $n$ is the minimum complexity over all possible algorithms solving the problem over instances of size $n$.
Messages sent in the LOCAL model can be arbitrarily large. As a result, without loss of generality, each node learns its entire $T$-hop neighborhood in $T$ rounds of LOCAL.
This makes a $T$-round algorithm for a problem in the LOCAL model equivalent to a map from possible $T$-hop neighborhoods to possible node outputs.

In infinite graphs, it is nonsensical to consider the complexity of most problems as the time needed for the last node to terminate.
The complexity also cannot be measured with respect to the number of nodes or the value of the largest identifier.
To tame the infinite, 
we instead aim to bound the time it takes a node to finish with respect to some local parameters, for example, its identifier.
To have nodes terminate at different times, the problems we consider must be inherently quite local.
We refer to this model, where nodes commit to an output in a time that is a function of a local parameter, and thus algorithms make sense on both finite and infinite graphs,
as the EarlyStop-LOCAL model (ES-LOCAL for short).
This model was also used in the work by Miller and Pelc~\cite{MP_dc25} and appeared previously in other sources~\cite[Exercise 1.7]{HS_da2020book}.

\subsection{Useful Subroutines}

Consider a graph $G = (V,E)$ and a subset $S \subseteq V$ of its nodes. The set $S$ is \emph{independent} iff for all $u,v \in S$, $uv \not \in E$. $S$ is \emph{dominating} iff for all $u \in V \setminus S$, $\exists v \in S$, $uv \in E$.
If $S$ is both independent and dominating, it is a \emph{maximal independent set} (MIS).

Let each node $v\in V$ be given a list of colors $\psi(v)$.
A list-coloring of the graph for these lists is a map $\chi:V \to \bigcup_{v\in V} \psi(v)$ s.t.\ for each node $v$, $\chi(v) \in \psi(v)$, and for any two adjacent nodes $u$ and $v$, $\chi(u) \neq \chi(v)$.
When each node $v$ is given a list of size $\card{\psi(v)} \geq \deg(v)+1$, the coloring is known as a \emph{degree+1-list coloring}.
Having lists with this minimum size guarantees the existence of a solution $\chi$. In fact, it also guarantees that a partial solution can always be extended to a full one.
When the lists are all equal to $[k]$ for some integer $k$, the problem is known as $k$-coloring. In the extensively studied $\Delta+1$-coloring problem, each node has $[\Delta+1]$ as its list of colors, where $\Delta$ is an upper bound on the maximum degree of the graph.

Recall that in the LOCAL model, $n$ corresponds to the number of nodes in the graph, and $\poly(n)$ bounds the nodes' identifiers.

\begin{lemma}[\cite{linial92}]
\label{lem:linial}
    Both computing a Maximal Independent set and computing a degree+1-list coloring can be done in $O(\log^* n)$ rounds of LOCAL on $n$-node graphs of constant maximum degree $\Delta \in O(1)$.
\end{lemma}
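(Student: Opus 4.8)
The plan is to reduce both tasks to computing a proper $(\Delta+1)$-coloring and then obtain such a coloring via Linial's iterated colour-reduction scheme. Throughout I would use the standard fact recalled in the excerpt that a $T$-round LOCAL algorithm on a graph of maximum degree $\Delta$ is exactly a function of each node's radius-$T$ ball, so it suffices, for each reduction step, to describe a rule a node can apply from a bounded neighbourhood.

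\textbf{Colour reduction (the heart of the argument).} The identifiers, lying in $[\poly(n)]$, already form a proper colouring with $k_0 = \poly(n)$ colours. Linial's lemma provides, for each $k$, a family $\mathcal{F} = \set{F_1,\dots,F_k}$ of subsets of $[m]$ with $m = O(\Delta^2 \log k)$ that is $\Delta$\emph{-cover-free}: no $F_i$ is contained in a union $F_{j_1} \cup \cdots \cup F_{j_\Delta}$ of $\Delta$ of the others. Given a proper $k$-colouring, a node $v$ with colour $c(v)$ reads the colours of its $d \le \Delta$ neighbours and recolours itself with any element of $F_{c(v)} \setminus \bigcup_{i} F_{c(u_i)}$, which is nonempty by the cover-free property; this is one LOCAL round and produces a proper $m$-colouring. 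Iterating with $k_{t+1} = O(\Delta^2 \log k_t)$ — which for constant $\Delta$ is essentially $k_{t+1} = O(\log k_t)$ — the colour count drops to $O(\Delta^2)$ after $t^\ast = O(\log^* k_0) = O(\log^* n)$ rounds. A further $O(\Delta^2) = O(1)$ rounds finish the job: process the colour classes $1, 2, \dots$ in order, and in round $c$ every node currently holding colour $c$ recolours itself to the least colour in $[\Delta+1]$ avoided by all its neighbours, which exists since a node has at most $\Delta$ neighbours.

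\textbf{From colouring to MIS and to degree+1-list colouring.} Given a proper $(\Delta+1)$-colouring $\chi$, build an MIS in $\Delta+1 = O(1)$ rounds: start with $S = \emptyset$, and in round $c$ let every $v$ with $\chi(v) = c$ and no neighbour already in $S$ join $S$. Each colour class is independent, so no conflict arises within a round; at the end every node is in $S$ or adjacent to $S$, so $S$ is maximal. For degree+1-list colouring, it is enough to first compute a proper $O(\Delta^2)$-colouring $\chi$ as above (no need to push down to $\Delta+1$), then process its colour classes in order: in round $c$, every still-uncoloured $v$ with $\chi(v) = c$ picks any colour in $\psi(v)$ not yet used by a neighbour; since $\card{\psi(v)} \ge \deg(v)+1$ exceeds the number of neighbours, a permissible colour always remains. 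This runs in $O(\Delta^2) = O(1)$ further rounds, so both procedures total $O(\log^* n)$.

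\textbf{Main obstacle.} The only non-elementary ingredient is the existence of the $\Delta$-cover-free families with $m = O(\Delta^2 \log k)$, which I would establish by the usual probabilistic argument: take each $F_i$ to contain each element of $[m]$ independently with probability $1/(\Delta+1)$, and union-bound over the $k \binom{k-1}{\Delta}$ bad events ``$F_i \subseteq \bigcup_{j} F_{j}$''; for $m = c \Delta^2 \log k$ with a suitable constant $c$ this probability is below $1$, so a good family exists. Since LOCAL nodes have unbounded local computation and the family depends only on $k$ and $\Delta$, it can simply be hard-coded. Everything else is routine bookkeeping — verifying that each step is one genuine LOCAL round, that all intermediate graphs still have maximum degree $\Delta$, and that $\log^*(\poly(n)) = \Theta(\log^* n)$. (On paths and cycles one could alternatively invoke the Cole–Vishkin recursion in place of Linial's set families, but the cover-free version handles arbitrary constant $\Delta$ uniformly.)
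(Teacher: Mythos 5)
Your proposal is correct: the paper does not prove this lemma but simply cites it from \cite{linial92}, and your reconstruction (iterated colour reduction via $\Delta$-cover-free set families, followed by the standard constant-round reductions from an $O(1)$-colouring to MIS and to degree+1-list colouring) is exactly the classical argument behind the cited result. The only nitpick is that the fixed point of the recursion $k_{t+1}=O(\Delta^2\log k_t)$ is $O(\Delta^2\log\Delta)$ rather than $O(\Delta^2)$, but since $\Delta\in O(1)$ here this is immaterial and your complexity bounds go through unchanged.
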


\section{Our Main Methods}
\label{sec:subroutines}

\subsection{Ensuring Rendezvous from Crossing: CarefulWalk}
\label{sec:carefulwalk}

Our model assumes that the agents have no detection mechanism other than that they know if they are at the same node in a given round, in which case they achieve rendezvous. In particular, the agents cannot detect if they cross the same edge in opposite directions in the same round, and this does not count as achieving rendezvous. In this section, we show that having a labeled line makes this distinction irrelevant.

We give a simple procedure (\cref{alg:carefulwalk}) to cross edges which ensures that the agents cannot exchange positions on the line without achieving rendezvous. To do so, \CarefulWalk\ takes $4$ rounds to cross each edge. Its basic idea is to have the agents behave differently depending on whether they cross an edge in the direction of increasing or decreasing IDs. The same can be achieved given a common orientation of the edges instead of node labels.

\begin{algorithm}[ht]
\caption{\CarefulWalk, for crossing an edge $e = uv$}
\label{alg:carefulwalk}
\begin{algorithmic}[1]
    \Statex Let $u$ be the initial node, $v$ the target node, $e=uv$ the edge to cross.
    \State Wait at initial node for one round.\Comment{Agent at $u$ at $t=0,1$.}
    \State Cross the edge: $u \to v$. \Comment{Agent at $u$ at $t=1$, at $v$ at $t=2$.}
    \If{$\ID(v) > \ID(u)$}
\State Wait two rounds. \Comment{Agent at $v$ at $t=2,3,4$.}
    \Else \Comment{$\ID(v) < \ID(u)$}
    \State Cross the edge: $v \to u$. \Comment{Agent at $v$ at $t=2$, at $u$ at $t=3$.}
    \State Cross the edge: $u \to v$. \Comment{Agent at $u$ at $t=3$, at $v$ at $t=4$.}
    \EndIf
\end{algorithmic}
\end{algorithm}

\begin{figure}
        \centering
        \small
        \includegraphics[page=1,width=0.15\textwidth]{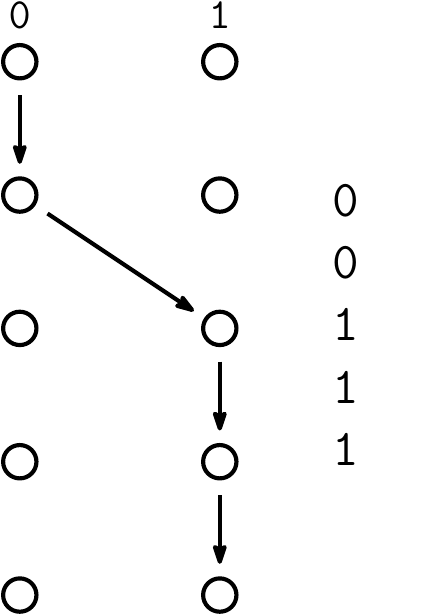}
        \hspace{0.1\textwidth}
        \includegraphics[page=2,width=0.15\textwidth]{careful_walk_v1}
        \hspace{0.1\textwidth}
        \includegraphics[page=3,width=0.15\textwidth]{careful_walk_v1}
        \caption{Illustration of the moves performed by \CarefulWalk\ to cross an edge $uv$. The five circles below each $0$ (resp.\ $1$) represent the node of lower ID (resp.\ higher ID) at $5$ successive time steps. Arrows represent the moves (or absence thereof) of an agent in a \CarefulWalk\ crossing the edge. The left and middle figures depict crossing the edge in opposite directions. The right figure depicts two agents performing a \CarefulWalk\ over the same edge simultaneously in opposite directions, achieving rendezvous at time step $4$, after $3$ moves. The $5$-bit strings in the figures describe the positions occupied the agents over the $5$ time steps.}
        \label{fig:careful_walk}
\end{figure}

Consider an algorithm $\calA$ for the moves of an agent. Let $\care(\calA)$ be the algorithm obtained from $\calA$ by replacing each edge-crossing in $\calA$ by a call to \CarefulWalk, and stretching each stay at a node by a factor of $4$. This new algorithm basically does the same moves as $\calA$ at a slower pace.
We show in \cref{thm:no-edge-crossing} that this transformed algorithm is able to achieve rendezvous as $\calA$ would in the stronger setting where the agents can detect simultaneous edge-crossing.

\begin{theorem}
\label{thm:no-edge-crossing}
    Let $G$ be a labeled graph, with agents placed at two locations $v_\alpha$ and $v_\beta$ of the graph, $T$ an integer, and let $\calA$ be a deterministic algorithm for the agents' moves.
    Suppose that for any delay $\tau\geq 0$ between the wake-up times of the two agents, with simultaneous edge-crossing counting as rendezvous, the two agents achieve rendezvous in at most $T$ rounds when executing $\calA$ on the graph $G$ with $v_\alpha$ and $v_\beta$ as the starting nodes for the two agents.

    Let $\care(\calA)$ be
    the algorithm obtained from $\calA$
by replacing every edge-crossing by a \CarefulWalk, and every action of staying at a node by $4$ such actions.
    Then in the setting without simultaneous edge-crossing, on the same graph $G$ and with the same starting nodes $v_\alpha$ and $v_\beta$, $\care(\calA)$ has the agents achieve rendezvous in at most $4T$ rounds, for any delay $\tau\geq 0$ between the agents' wake-up times.
\end{theorem}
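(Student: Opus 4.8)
The plan is to set up a tight coupling between an execution of $\calA$ in the strong model (where simultaneous edge-crossings count as rendezvous) and the corresponding execution of $\care(\calA)$ in the weak model, scaled by a factor of $4$ in time. I would first fix notation: let agent $\alpha$ wake up at round $0$ and agent $\beta$ at round $\tau\geq 0$ in the $\calA$-execution, and correspondingly let them wake up at rounds $0$ and $4\tau$ in the $\care(\calA)$-execution. The key structural observation is that $\care(\calA)$ is a "slowdown" of $\calA$: each unit of $\calA$-time is replaced by a block of $4$ consecutive rounds of $\care(\calA)$-time, during which the agent either stays put (if $\calA$ stays) or performs a single \CarefulWalk (if $\calA$ crosses an edge). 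Since an agent's choices in $\calA$ depend only on its history of visited labels and entry ports, and since \CarefulWalk's first two moves (wait, then cross) are independent of the edge's orientation, an agent executing $\care(\calA)$ always learns the same information by the end of a block as the corresponding agent executing $\calA$ learns in the corresponding step — so the two executions stay in lockstep, and in particular an agent running $\care(\calA)$ is at the $\calA$-position at the start (and end) of each block.

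Next I would prove the central claim: if, in the $\calA$-execution, rendezvous is achieved at some round $t\le T$ — either by the two agents occupying the same node, or by the two agents traversing the same edge in opposite directions between rounds $t$ and $t+1$ — then in the $\care(\calA)$-execution the two agents occupy the same node at some round $\le 4T$. The first case is immediate: if both agents are at node $w$ at $\calA$-round $t$, then by the lockstep property both are at $w$ at the start of the $t$-th block of $\care(\calA)$, i.e.\ at round $\le 4t \le 4T$. The second case is where the orientation trick does its work. Suppose agents $\alpha$ and $\beta$ cross edge $e = uv$ in opposite directions at $\calA$-step $t$, say $\alpha$ goes $u\to v$ and $\beta$ goes $v\to u$. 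In the $\care(\calA)$-execution, during the corresponding block both agents perform a \CarefulWalk over $e$, one from $u$ and one from $v$, and these two \CarefulWalk{}s are executed over an overlapping window of rounds (I need to check the overlap, using that the blocks are aligned up to the global offset $4\tau$ and that the two agents reached the endpoints of $e$ at the same $\calA$-step hence the same $\care(\calA)$-block). One of the two agents, say the one crossing towards the lower-ID endpoint, performs the $5$-position pattern $u,u,v,u,v$ (in its local coordinates), while the other performs $v,v,u,u,u$; inspecting these, as in \cref{fig:careful_walk}, the two agents are simultaneously at the same endpoint at the $4$th time step of the block (they share position at the moment the "down" agent is mid-backtrack). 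Hence rendezvous in the weak model within that block, again at a round $\le 4T$.

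I would then assemble these pieces: \cref{thm:no-edge-crossing}'s hypothesis gives that for \emph{every} $\tau$, the $\calA$-execution achieves rendezvous (in the strong sense) within $T$ rounds; applying the claim to the $\care(\calA)$-execution with startup offset $4\tau$ gives rendezvous in the weak sense within $4T$ rounds. Since this works for all $\tau\ge 0$, and since every delay in the $\care(\calA)$-world is of the form $4\tau$ only after we rescale — here I should be slightly careful: in the conclusion we want it to hold for \emph{all} delays $\tau'\ge 0$ in the $\care(\calA)$ execution, not just multiples of $4$. This is the one genuinely fiddly point. The resolution is that a delay of $\tau'$ rounds in $\care(\calA)$ still has the late agent "join" the pattern of the early agent at some point, and one can absorb the non-multiple-of-$4$ part: either redo the coupling with the late agent's blocks offset by $\lceil \tau'/4\rceil$ and observe the early agent has only gotten further along (a monotonicity/prefix argument, as is standard for wake-up delays in rendezvous), or invoke the hypothesis with a suitably chosen integer delay $\tau = \lceil \tau'/4 \rceil$ and check the block alignment still yields the edge-overlap in the \CarefulWalk case. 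I expect this delay-rounding bookkeeping, together with carefully verifying the round-window overlap of the two \CarefulWalk invocations, to be the main obstacle; the rest is a routine lockstep induction plus the finite case-check on the $5$-symbol position strings.
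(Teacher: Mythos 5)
Your proposal follows the same overall strategy as the paper (couple the $\care(\calA)$ execution to a $4\times$-slowed execution of $\calA$, invoke the strong-model rendezvous guarantee, then argue the agents must share a node within the corresponding $4$-round block), but it has a genuine gap: the case of an arbitrary wake-up delay $\tau$ in the $\care(\calA)$ execution that is not a multiple of $4$. You explicitly flag this as ``the one genuinely fiddly point'' and sketch two possible fixes without carrying either out, yet this is precisely the substantive content of the theorem's proof. Your lockstep argument, as written, only works when the two agents' $4$-round blocks are aligned; with misaligned blocks, both of your rendezvous cases break: in the same-node case the agents are at the common node $w$ at the starts of blocks that begin at different absolute rounds, and in the opposite-crossing case your check of the two $5$-symbol position strings is only valid when the two \CarefulWalk{}s occupy the same window of rounds. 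Your first suggested fix (a ``monotonicity/prefix argument'' that the early agent ``has only gotten further along'') does not address the problem, since the issue is not that one agent is ahead in its schedule but that the residual offset $r=4\lceil\tau/4\rceil-\tau\in\{1,2,3\}$ shifts every block boundary of one agent relative to the other.

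The paper closes this gap as follows: it couples the $\care(\calA)$ execution with delay $\tau$ to the $\calA$ execution with delay $\tau'=\lceil\tau/4\rceil$ via $x'(i)=x(4i)$, $y'(i)=y(4i-r)$, and then exploits a structural property of the $4$-round actions (both \CarefulWalk\ and the stretched wait, see \cref{alg:carefulwalk}): an agent is at its block-end node at relative rounds $2$ and $4$ of its block, and at its block-start node at relative rounds $0$ and $1$. In the same-node case this forces a common node at absolute round $4\tstar$ or $4\tstar-2$ for every $r\in\{1,2,3\}$, and in the swap case a short case analysis (the paper's \cref{fig:rendezvous-careful-walk}) forces a meeting in $[4\tstar-4,4\tstar-1]$. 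Without this (or an equivalent) analysis of the residual offset, your argument only proves the theorem for delays divisible by $4$, so the proposal as it stands is incomplete.
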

\begin{proof}
    Let agent $\alpha$ be the first agent to wake up (in round $0$), and $\beta$ the other agent (waking up in round $\tau \geq 0$).
    Consider the execution of $\care(\calA)$. In that execution, for each integer $i \geq 0$, let $x(i)$ be the position of agent $\alpha$ in round $i$, and $y(i)$ that of agent $\beta$. In particular, $y(i) = y(0)$ for all $i \leq \tau$.
    
    Let $\tau' = \ceil{\tau / 4}$ and $r = 4 \tau' - \tau$.
    Let $x'(0) = x(0)$, $y'(0) = y(0)$, and for all $i \geq 1$, let $x'(i) = x(4i)$ and $y'(i) = y(4i - r)$.
    Note that the rounds $\set{4i : i\in \naturals}$ (resp.\ $\set{4i-r: i \geq \tau'}$) are exactly the rounds in which agent $\alpha$ (resp.\ $\beta$) finishes and starts a $4$-round action in $\care(\calA)$ -- staying put for $4$ rounds, or doing a \CarefulWalk.
    As a result, $x'(i)$ and $y'(i)$ are the positions that agents $\alpha$ and $\beta$ would have in round $i$ of an execution of the original algorithm $\calA$ with a delay of $\tau'$. 

    If agents start at the same node, rendezvous is trivially achieved immediately. Otherwise, since $\calA$ is guaranteed to achieve rendezvous in $T$ or less rounds (with simultaneous edge-crossings counting as rendezvous), there must exist some $t \in [1,T]$ such that either $x'(t) = y'(t)$, or $x'(t) = y'(t-1)$ and $y'(t) = x'(t-1)$. Let $\tstar\geq 1$ be the smallest such value.
    
    Suppose that $x'(\tstar) = y'(\tstar)$. If $r = 0$ (i.e., $\tau = 0 \mod 4$), then $\alpha$ and $\beta$ are at the same node $x(4\tstar) = y(4\tstar)$ in round $4\tstar$. 
    Otherwise ($r \in \set{1,2,3}$), consider the positions of the agents in rounds $4\tstar$ and $4\tstar-2$.
    Regardless of whether $\alpha$ was staying idle for $4$ rounds or performing a \CarefulWalk\ to node $x(4\tstar)$ in the $4$ rounds prior to round $4\tstar$, it was at node $x(4\tstar)$ in rounds $4\tstar$ and $4\tstar-2$ (see \cref{alg:carefulwalk} and \cref{fig:careful_walk}, the description and an illustration of \CarefulWalk).
    Similarly, regardless of the action that agent $\beta$ initiates in round $4\tstar-r$, it is at node $y(4\tstar-r) = x(4\tstar)$ in rounds $(4\tstar-r)$ and $(4\tstar-r+1)$. For all possible values of $r$, the agents meet in either round $4\tstar$ or $4\tstar-2$.
    
    Suppose that $x'(\tstar) = y'(\tstar-1)$ and $x'(\tstar-1) = y'(\tstar)$.
    A similar case analysis as the one we just did shows that the agents necessarily meet in a round in the interval $[4\tstar-4,4\tstar-1]$ (see \cref{fig:rendezvous-careful-walk}).

    \begin{figure}[htb]
        \centering
        \small
        \newcommand{\CarefulFigureWidth}{0.1\textwidth}
        \newcommand{\CarefulFigureGapWidth}{0.01\textwidth}
\subcaptionbox{\label{subfig:-3-gap}}{\includegraphics[page=4,width=\CarefulFigureWidth]{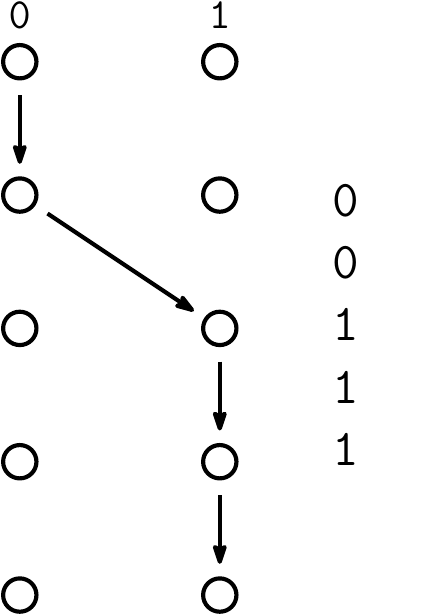}}\hspace{\CarefulFigureGapWidth}\subcaptionbox{\label{subfig:-2-gap}}{\includegraphics[page=5,width=\CarefulFigureWidth]{careful_rendezvous_v2}}\hspace{\CarefulFigureGapWidth}\subcaptionbox{\label{subfig:-1-gap}}{\includegraphics[page=6,width=\CarefulFigureWidth]{careful_rendezvous_v2}}\hspace{\CarefulFigureGapWidth}\subcaptionbox{\label{subfig:0-gap}}{\includegraphics[page=7,width=\CarefulFigureWidth]{careful_rendezvous_v2}}\hspace{\CarefulFigureGapWidth}\subcaptionbox{\label{subfig:1-gap}}{\includegraphics[page=8,width=\CarefulFigureWidth]{careful_rendezvous_v2}}\hspace{\CarefulFigureGapWidth}\subcaptionbox{\label{subfig:2-gap}}{\includegraphics[page=9,width=\CarefulFigureWidth]{careful_rendezvous_v2}}\hspace{\CarefulFigureGapWidth}\subcaptionbox{\label{subfig:3-gap}}{\includegraphics[page=10,width=\CarefulFigureWidth]{careful_rendezvous_v2}}\caption{The agents always meet when crossing the same edge in opposite directions using \CarefulWalk.}
        \label{fig:rendezvous-careful-walk}
    \end{figure}

    In all cases, agents executing $\care(\calA)$ achieve rendezvous in round $4T$ at the latest, proving the claim.
\end{proof}

\Cref{thm:no-edge-crossing} allows us to design algorithms for the simpler setting where agents detect simultaneous edge-crossing, and automatically translate rendezvous results about these algorithms to the setting where agents only achieve rendezvous if at the same node in a given round, which is the original setting of interest.
This dramatically simplifies the work of the algorithm designer.
Importantly, on the line, having a detection mechanism for simultaneous edge-crossing prevents the agents to swap positions without achieving rendezvous.

\begin{proposition}
    \label{prop:no-position-exchange}
    Consider a canonical naming of the nodes on the infinite line as relative integers, s.t.\ for any $i \in \integers$, node $v_i$ is adjacent to nodes $v_{i-1}$ and $v_{i+1}$ on the line.
Let $a,b,c,d \in \integers$ be integers and $T,T'$ be two rounds with $T' > T$. For each $t\in[T,T']$, let $x(t) \in \integers$ (resp.\ $y(t) \in \integers$) be the location of agent $\alpha$ (resp.\ $\beta$) in round $t$. 
    Consider the following two scenarios:
    \begin{enumerate}
        \item (Agents swap positions on the line) $a\leq b<c\leq d$, $x(T) = a < c = x(T')$ and $y(T) = d > b = y(T')$. 
        \item (An agent scans an interval that the other agent stays in) $a \leq b \leq c \leq d$, $\forall z \in [a,d]$, $\exists t \in [T,T']$ s.t.\ $x(t) = z$, and $\forall t \in [T,T']$, $y(t) \in [b,c]$.
    \end{enumerate}
    If simultaneous edge-crossing counts as rendezvous, then in both scenarios, the agents necessarily meet in a round between $T$ and $T'$.
\end{proposition}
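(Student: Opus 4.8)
The plan is to prove both scenarios by a discrete intermediate value argument on the gap function between the two agents. Define, for each round $t \in [T,T']$, the signed displacement $g(t) = x(t) - y(t)$. Since in each round an agent either stays put or moves to an adjacent node, each of $x$ and $y$ changes by at most $1$ per round, so $g(t+1) - g(t) \in \{-2,-1,0,1,2\}$. The key observation is that $g$ cannot jump from a positive value to a negative value (or vice versa) without either passing through $0$ at some round, or jumping by exactly $\pm 2$ across $0$ — i.e., from $g(t) = 1$ to $g(t+1) = -1$ or from $g(t) = -1$ to $g(t+1) = 1$. In the first case the agents are at the same node in round $t$ (or $t+1$); in the second case $x(t) - y(t) = 1$ and $x(t+1) - y(t+1) = -1$, which forces $x(t) = y(t) + 1 = y(t+1)$ and $y(t) = x(t) - 1 = x(t+1)$, so the agents cross the same edge in opposite directions between rounds $t$ and $t+1$ — a simultaneous edge-crossing, which counts as rendezvous by hypothesis.

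For scenario 1, I would apply this directly: $g(T) = x(T) - y(T) = a - d \leq b - c < 0$ (using $a \le b$ and $c \le d$, with $b < c$), while $g(T') = x(T') - y(T') = c - b > 0$. So $g$ changes sign over $[T,T']$, and by the observation above there is a round in $[T,T']$ where the agents meet or perform a simultaneous edge-crossing, hence achieve rendezvous. The only minor care needed is checking the endpoints — strictness of $b < c$ gives $g(T) < 0$ strictly and $g(T') > 0$ strictly, so the sign change is genuine.

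For scenario 2, the argument is slightly different since we are not told the endpoint values of $y$ but only that $y(t) \in [b,c]$ throughout and that $x$ takes every value in $[a,d]$ at some round. Here I would argue: there is a round $t_a \in [T,T']$ with $x(t_a) = a$ and a round $t_d \in [T,T']$ with $x(t_d) = d$. At round $t_a$, $g(t_a) = a - y(t_a) \leq a - b \leq 0$; at round $t_d$, $g(t_d) = d - y(t_d) \geq d - c \geq 0$. If either inequality is an equality we might only get $g = 0$, i.e., rendezvous directly. Otherwise $g(t_a) < 0 < g(t_d)$, and since $g$ is defined on the whole interval between $t_a$ and $t_d$ (both lie in $[T,T']$), we again invoke the discrete sign-change observation on the subinterval with endpoints $\min(t_a,t_d)$ and $\max(t_a,t_d)$ to conclude the agents meet or cross simultaneously. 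One subtlety: if $a = b$ or $c = d$ the weak inequalities degrade, but then $g(t_a) \le 0$ and if $g(t_a) = 0$ we are done at round $t_a$, so we may assume $g(t_a) < 0$, and symmetrically $g(t_d) > 0$; this edge-case bookkeeping is routine.

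I expect the main (though modest) obstacle to be stating the discrete intermediate value lemma cleanly enough to cover the $\pm 2$ jump across zero, and correctly identifying that such a jump is precisely a simultaneous edge-crossing rather than something that needs to be excluded — getting the directions of the two crossings to match up (one agent going $v_i \to v_{i+1}$, the other $v_{i+1} \to v_i$) is the one place where a careless sign could cause trouble. Once that lemma is isolated and proved, both scenarios follow by picking the right two rounds at which to evaluate $g$ and checking the endpoint inequalities, which is purely mechanical. I would present the lemma as a small self-contained claim, then dispatch the two scenarios in a few lines each.
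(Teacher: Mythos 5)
Your proposal is correct. Note that the paper states \cref{prop:no-position-exchange} without proof, treating it as immediate; your discrete intermediate-value argument on the gap $g(t) = x(t) - y(t)$ is exactly the natural justification, and your handling of the two delicate points is sound: since each agent moves by at most one node per round, a sign change of $g$ forces either $g=0$ at some round (same node) or a jump from $+1$ to $-1$ (or $-1$ to $+1$), which, as you verify, can only arise from the two agents traversing the single edge between their positions in opposite directions in the same round --- a simultaneous edge-crossing, hence rendezvous by hypothesis. The endpoint bookkeeping in both scenarios (strictness from $b<c$ in scenario~1; choosing rounds $t_a$, $t_d$ with $x(t_a)=a$, $x(t_d)=d$ and dispatching the equality cases in scenario~2) is also correct, so nothing is missing.
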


\subsection{Computing Ruling Sets on Paths}
\label{sec:pathrulingset}

In this section, we give a LOCAL algorithm for computing what is essentially a ruling set over a path, but focusing on a subset of the nodes $U \subseteq V$.
\begin{definition}[Set-Limited Ruling Sets]
    \label{def:subset-ruling-set}
    Let $\alpha$ and $\beta$ be two nonnegative integers, $G=(V,E)$ be a graph, and $U \subseteq V$ a subset of its nodes.
    A set of nodes $S$ is a $U$-limited $(\alpha,\beta)$-ruling set of $G$ if and only if:
    \begin{enumerate}
    \item $S \subseteq U$,
    \item For any two nodes $u,v \in S$, $\dist(u,v) \geq \alpha$ (packing property),
    \item For any node $u \in U$, $\exists v \in S$, $\dist(u,v) \leq \beta$ (covering property).
    \end{enumerate}
\end{definition}
A standard ruling set of a graph $G = (V,E)$ is a $V$-limited ruling set of $G$. As simple examples of standard ruling sets, the whole set of nodes $V$ is an $(1,0)$-ruling set, a maximal independent set of a graph $G$ is a $(2,1)$-ruling set of that graph, and a maximal independent set of its order-$k$ power graph $G^k$ is a $(k+1,k)$-ruling set of $G$.

Our $U$-limited ruling sets are exactly like standard ruling sets, except that they only need to cover nodes in $U$ and can only contain nodes from $U$. Importantly, they are not equivalent to computing a standard ruling set of $G[U]$: in our definition of $U$-limited ruling sets, we measure the distances in the original graph $G$, not $G[U]$.
Taking a finite set $U$ on an infinite graph $G$ means that its complexity can be measured as the longest time taken by a node to terminate, as in the standard LOCAL model, even as the communication takes place on an infinite graph.
This algorithm will be the basis of our subsequent algorithm that performs a similar task, but additionally colors the ruling set and is early stopping.

\begin{algorithm}[ht]
\caption{\PathRulingSet.}
\label{alg:pathrulingset}
\begin{algorithmic}[1]
    \Statex \textbf{Input:} A finite set of nodes $U \subseteq V$ with IDs in $[n]$, an integer distance $R \geq 1$.
    \Statex Let $d = \ceil{\log R}$, $S_{0} \gets U$. \Comment{The ruling set is initially the entire set $U$}
    \For{$i = 1$ to $d$} \Comment{Each loop iteration computes $S_i \subseteq S_{i-1}$}
    \State \textbf{if} $i \leq d-1$ \textbf{then} Let $R_{i} \gets 2^{i} - 1$ \textbf{else} Let $R_{i} \gets R - 1$.
    \State Consider the graph $H_{i} \gets G^{R_{i}}[S_{i-1}]$. \State Compute an MIS over $H_{i}$, let $S_{i}$ be its nodes.
    \EndFor
\State Initialize $\Soutput \gets S_d$.
    \For{$6$ iterations}
    \State For each $v \in U$, compute $b_v = \min_{u\in \Soutput} \dist(v,u)$. Let $\Scand = \set{v \in U: b_v \geq R}$.
\State Add to $\Soutput$ all $v\in \Scand$ s.t.\ $\forall u \in N^{R-1}(v) \cap \Scand, (b_u < b_v \vee (b_u = b_v \wedge \ell(v) > \ell(u)))$.
    \EndFor
        
\Statex \textbf{Return:} $\Soutput$.
\end{algorithmic}
\end{algorithm}
\newcommand{\PRSdl}{\kappa_{\ref{alg:pathrulingset}}}
\newcommand{\PRSdd}{\kappa'_{\ref{alg:pathrulingset}}}
\newcommand{\PRSc}{\kappa''_{\ref{alg:pathrulingset}}}

\begin{lemma}
\label{lem:pathrulingset}
    Executed by a set of nodes $U$ with identifiers between $1$ and $n$ on an infinite labeled line, \cref{alg:pathrulingset} computes a $U$-limited $(R,R-1)$ ruling set of $G$ in $O(R \log^* n)$ rounds of deterministic LOCAL.
\end{lemma}

\begin{proof}
    $d = \ceil{\log R}$ implies that $2^{d} \geq R > 2^{d-1}$.
    We show that the following properties always hold for each integer $i \in [0,d-1]$: the nodes in the set $S_i$ are at distance at least $2^{i}$ from each other (packing property), and every node from $U$ is at distance at most $2^{i+1}-i-2$ from a node in $S_i$ (covering property).
    We then analyze the distances for the last two sets $S_{d}$ and $\Soutput$.
    Finally, we use those properties to guarantee the $O(R \log^*n)$ runtime.

    \paragraph{Packing property.}
    For $i=0$, the property trivially holds: $2^{i} = 1$ and two distinct nodes are necessarily at distance at least $1$ from each other. For $i \in [1,d-1]$, the property follows directly from the fact that in iteration $i$ of the loop, we compute a maximal independent set over $H_{i} = G^{R_{i}}[S_{i-1}]$. Since the nodes in $S_{i}$ form an independent set in $G^{R_{i}}$, any two nodes in $S_{i}$ are not adjacent in $G^{R_{i}}$, thus are at distance at least $R_{i}+1 = 2^{i}$ from each other in $G$, giving the claim. Additionally, nodes in $S_d$ are at least $R_d+1 = R$-apart by the same argument.
    Finally, the nodes that are added to $S_d$ to obtain $\Soutput$ are selected to be at distance $\geq R$ from other nodes already in $\Soutput$ (condition $b_v \geq R$ in computing $\Scand$), and so that two nodes at distance $< R$ are never simultaneously added to $\Soutput$ (condition that $b_v$ is maximal among $\Scand$ nodes within a $(R-1)$-radius, using IDs as tie-breakers)

    \paragraph{Covering property.}
    Since $S_0 = U$, for $i = 0$, we have that every node in $U$ is at distance at most $0 = 2^{i+1}-i-2$ from a node in $S_i$.

    Consider an integer $i \in [1,d-1]$ and a node $v \in U$.
    Let $u \in S_{i-1}$ be the node that is nearest to $v$ in the set $S_{i-1}$. If this node is still in $S_{i}$, then $\dist(v,S_{i}) = \dist(v,S_{i-1})$. If not, then $u$ has a neighbor $u' \in G^{R_i}[S_{i-1}]$ that is still in the next independent set $S_{i}$, meaning that $\dist(u,u')\leq R_i$.
    Thus, for any node $v \in U$, $\dist(v,S_{i}) \leq \dist(v,S_{i-1}) + R_i$.
    
    Therefore for every integer $i \in [1,d-1]$, the maximum distance of a node in $U$ to a node in $S_{i}$ is at most $\sum_{j=1}^{i} R_j = \sum_{j=1}^{i} (2^j -1)= 2^{i+1} - i - 2 \leq 2^{d} - d - 1 \leq 2R - 2$.
    The MIS computation over $S_d$ adds at most $R-1$ to this distance, giving a maximum distance of $3R-3$.
    
    Finally, consider the computation of $\Soutput$, initialized as $S_d$. 
    Consider an iteration of the second loop, and take the current values of $\Soutput$ and $(b_v)_{v \in U}$.
    Consider a node $v \in U$ s.t.\ $b_v \geq R$. A node in its distance-$(3R-4)$ neighborhood must be added to $\Soutput$. Indeed, we have $b_u \leq 3R-3$ for all nodes $u \in U$ from the established properties of $S_d$.
    This implies that within distance $2R-3$ from $v$, there is a node $w$ s.t.\ $b_w \geq R$ and $b_w \geq b_u$ for all $u \in N^{R-1}(w)$.
    Either this node is added to $\Soutput$, or a node $u \in N^{R-1}(w)$ of equal value $b_u = b_w$ is added to $\Soutput$. 

    Suppose that after $k$ iterations of the loop, for some node $v \in U$, it holds that $b_v \geq R$. This means that we had $b_v \geq R$ for the $k$ iterations of the loop, and therefore, $k$ nodes were added to $\Soutput$ in the distance-$(3R-4)$ neighborhood of $v$. A subsequent iteration number $k+1$ should add another node to $\Soutput$ within this neighborhood. But nodes in $\Soutput$ are at distance at least $R$ from one another, so the distance-$(3R-4)$ neighborhood of $v$ can contain at most $6$ nodes in $\Soutput$. Therefore, after $6$ iterations of the loop, all nodes $v \in U$ satisfy $b_v \leq R-1$, i.e., are at distance $< R$ from $\Soutput$.

    \paragraph{Complexity.}
    $G$ being a path/line, the fact that two nodes in $S_{i-1}$ are at distance at least $2^{i-1} \geq (R_i+1)/2$ from one another in $G$ bounds the maximum degree of $H_i = G^{R_i}[S_{i-1}]$ to $2$.
The nodes of these graphs are also all from $U$, which bounds their identifiers to be at most $n$.
    These two bounds imply that all the computations we do on each of these graphs can be done in $O(\log^* n)$ rounds of LOCAL on the said graph.
Simulating a round of LOCAL on a graph $H \subseteq G^k$ can be done in $k$ rounds of communication over $G$.
    As we consider $\Theta(\log R)$ graphs whose costs of simulation increase geometrically, we get a total complexity of
$O(\sum_{i=1}^{d} R_i \log^* n)= O(R \log^* n)$ for the $d$ MIS computations.
    The greedy extension of $S_d$ to $\Soutput$ only adds $O(R)$ rounds to this complexity, as $O(R)$ rounds suffice for each node to compute its distance $\leq 3R-3$ to the nearest node in $S_d$ and compare it to the values of the nodes within distance $R-1$.
\end{proof}

\subsection{Computing Colored Ruling Sets on Paths, and Stopping Early}
\label{sec:earlystop-pathrulingset}

The previous section considered the standard LOCAL model of distributed computing, applied to a finite set of nodes (at most $n$, due to the bound on the nodes' identifiers), even if on an infinite line. In this section, we use this algorithm to compute the same kind of object on the infinite line, in the Early Stop paradigm.

Our algorithm gradually computes a ruling set $S$ of the infinite line, in the sense that we provide an algorithm in which every node on the infinite line eventually decides whether it is part of the ruling set or not, and in any given round, the set of nodes that committed to their output and joined the set $S$ is a ruling set over the nodes that committed to their output.
At a high level, \cref{alg:earlystoppathrulingset} partitions the nodes on the infinite line according to their IDs, and computes a ruling set over each set of nodes of this partition using \cref{alg:pathrulingset} (\PathRulingSet) from the previous section.
The computation done by the set of nodes $V_i = \set{v\in V: \log^*\ell(v) = i}$ is well defined as it is performed by a finite set of nodes, with bounded IDs. After $\Theta(R \cdot i)$ rounds, the nodes in $V_i$ have all finished their execution of \PathRulingSet, as did previously nodes in the sets $V_j$, $j<i$.
Once $V_i$ has computed a $V_i$-limited ruling set $S_i$, we use it to extend the current ruling set $S$, that only contains nodes from $V_{<i}$
at this point.
This is done greedily. We first have each node $v \in S_i$ join $S$ if and only if $S$ does not already contain a node at distance $<R$ from $v$ on the line. A second greedy extension slightly improves the covering distance. Finally, we assign a color to the new members of the ruling set.

\begin{algorithm}[ht]
\caption{\EarlyStopPathRulingSet.}
\label{alg:earlystoppathrulingset}
\begin{algorithmic}[1]
    \Statex \textbf{Input:} An infinite labeled line $G$, an integer distance $R \geq 1$.
    \State $S \gets \emptyset$ current ruling set, $\chi : S \to [17]$ current coloring.
    \ForEach{$i \in \naturals^*$ in parallel}
        \State Compute a $V_i$-limited $(R,R-1)$-ruling set with \PathRulingSet.
\Statey Let $S_i$ be the set and $\chi_i$ be the coloring of $S_i$ computed on the previous line.
        \State Wait for some $\Theta(R\cdot i)$ rounds, ensuring that nodes in $V_{<i}$ all chose their output.
\Statey Let $S_{<i}$ be the value of $S$ at this stage. 
        \Comment{$S_{<i}$ is a $V_{<i}$-limited $(R,R-1)$-ruling set}
\State Let $S'_i \gets S_i \setminus N^{R-1}(S)$.
        \State Update $S \gets S \cup S'_i$.
        \For{$4$ iterations}
            \State For each $v \in V_i$, compute $b_v = \min_{u\in S} \dist(v,u)$. Let $\Scand_i = \set{v \in V_i: b_v \geq R}$.
            \State Add to $S$ all $v\in \Scand_i$ s.t.\ $\forall u \in N^{R-1}(v) \cap \Scand_i, (b_u < b_v \vee (b_u = b_v \wedge \ell(v) > \ell(u)))$.
        \EndFor
        \State Each $u \in S \setminus S_{<i}$ computes $\psi(u) = [17] \setminus \set{\chi(w): w \in S_{<i} \cap N^{9R-1}(u)})$.
        \State Compute a list-coloring of $G^{9R-1}[S \setminus S_{<i}]$ according to the lists $(\psi(u))_{u \in S \setminus S_{<i}}$. \State Extend $\chi$ from $S_{<i}$ to $S$ with this coloring.
        \State Each node $v \in V_i$ learns $N^{R-1}(v) \cap S$ and $\chi(u)$ for each $u \in N^{R-1}(v) \cap S$.
        \State Nodes in $V_i$ commit to their output.
    \EndForEach
\end{algorithmic}
\end{algorithm}

\newcommand{\ESPRSdl}{\kappa_{\ref{alg:earlystoppathrulingset}}}
\newcommand{\ESPRSdd}{\kappa'_{\ref{alg:earlystoppathrulingset}}}
\newcommand{\ESPRSc}{\kappa''_{\ref{alg:earlystoppathrulingset}}}

\begin{lemma}
\label{lem:earlystoppathrulingset}
    For any round $T$, let $U_T$ be the set of nodes that committed to their outputs by round $T$, $S_T \subseteq U_T$ the set of nodes that joined $S$ by round $T$, and $\chi_T: S_T \to [17]$ the colors assigned to nodes in $S_T$.
    \Cref{alg:earlystoppathrulingset} achieves the following:
    \begin{enumerate}
        \item A node $v$ of label $\ell(v)$ terminates in $O(R \log^* \ell(v))$ rounds,
        \item At termination, a node $v$ outputs whether it is part of the ruling set $S$ or not, and if part of $S$, it also outputs a color between $1$ and $17$. If outside $S$, $v$ knows the members of $S$ at distance $\leq R - 1$ from itself together with their colors.
        \item $S_T$ is a $U_T$-limited $(R,R-1)$-ruling set, and $\chi_T$ is a proper coloring of $G^{9R-1}[S_T]$. 
\end{enumerate}
\end{lemma}
\begin{proof}
    We first show that \cref{alg:earlystoppathrulingset} (\EarlyStopPathRulingSet) constructs the claimed object,
    before bounding its complexity.

    \paragraph{Properties of the Ruling Set.}
    Recall that $S_i$, the output of \PathRulingSet\ executed by nodes in $V_i$, is a $V_i$-limited $(R,R-1)$-ruling set.
    Consider the set $S_{<i}$, the content of $S$ before adding nodes from $V_i$ to it.
    Assume $S_{<i}$ to be a $V_{<i}$-limited $(R,R-1)$-ruling set. We show that the nodes from $V_i$ added to it make $S$ a $V_{\leq i}$-limited $(R,R-1)$-ruling set. By induction, starting from the trivial base case of $S_{<1}$, we get the claimed properties about our ruling set.
    
    Every node $v \in V_i$ is at distance at most $R-1$ from a node $u \in S_i \subseteq V_i$. Consider the state of $S$ after adding $S'_i = S_i \setminus N^{R-1}(S)$ to it. If $u$ was added to $S$, then $v$ remains at distance at most $R-1$ from a node in $S$. If $u$ was not added to $S$, there exists a node $w \in S_{<i}$ that joined $S$ earlier, such that $\dist(u,w) < R$. This implies that $v$ is at distance at most $\dist(v,u) + \dist(u,w) \leq 2R-2$ from a node in $S$. At this state, $S = S_{<i} \cup S'_i$ is a $V_{\leq i}$-limited $(R,2R-2)$-ruling set. 
    The $4$ iterations of the inner loop that add to $S$ some of the nodes in $V_i$ furthest away from the current set $S$ make $S$ a $V_{\leq i}$-limited $(R,R-1)$-ruling set, from similar arguments as used in the proof of \cref{lem:pathrulingset}.

    Finally, the minimum distance of $R$ between nodes in $S$ ensures a maximum degree of $16$ for the graph $G^{9R-1}[S]$.
    Each list $\psi(u)$ for $u \in S\setminus S_{<i}$ therefore contains at least $\deg(u) + 1$ colors, where $\deg(u)$ is the degree of $u$ in $G^{9R-1}[S]$. As the lists are chosen so as to avoid creating conflicts with previously colored nodes in $S$, the extended coloring $\xi$ is a valid coloring of $G^{9R-1}[S]$.

    \paragraph{Complexity.}
    Recall that 
\PathRulingSet\ (\cref{alg:pathrulingset}) 
    with parameter distance $R\geq 1$
    has a complexity of $O(R \log^* n)$ rounds
when run by nodes of maximum ID $n$.
Nodes in $V_i$ thus spend $O(R\cdot i)$ rounds computing $S_i$ at the beginning of the (parallel) loop in \cref{alg:earlystoppathrulingset}.

    For every $i \in \naturals^*$, let $T_i$ be the round in which nodes in $V_i$ terminate in \cref{alg:earlystoppathrulingset}.
    Nodes in $V_i$ first compute $S_i$, before greedily adding nodes of $S_i$ to $S$.
    Before adding nodes of $S_i$ to $S$, all nodes in $V_j, j<i$ must have terminated, which requires waiting until round $T_{i-1}$ ($0$ for the first set of nodes).
    After this amount of time, inserting the nodes of $S_i$ into $S$ only takes $R-1$ rounds, for each node in $S_i$ to learn which nodes in its $(R-1)$-neighborhood have joined $S$.
    The $4$ loop iterations in which additional nodes from $V_i$ are added to $S$ are also completed in $O(R)$.
    Finally, solving the list-coloring problem is done in $O(R \cdot i)$ rounds (\cref{lem:linial}), and each node learning the output of nodes within distance $R$ takes $O(R)$ rounds.
    
In total, we have that
$T_i \leq \max(O(R \cdot i), T_{i-1}) + O(R \cdot i)$.
    By induction, we get that $T_i \in O(R \cdot i)$.
\end{proof}

\section{Our Optimal Rendezvous Algorithm}
\label{sec:our-algorithm}

\subsection{The Algorithm}
\label{sec:algorithm-description}

In this section, we describe our deterministic rendezvous algorithm. We actually describe an algorithm working in the setting where agents achieve rendezvous when crossing the same edge in opposite directions. By 
\cref{thm:no-edge-crossing}, this implies an algorithm with similar asymptotic complexity for the setting where agents only detect each other if at the same node in a given round.

Recall that both agents behave according to the same algorithm, which is just a sequence of moves. In most of this section, we take the perspective of one of the two agents, which we will denote as agent $\alpha$. We call agent $\beta$ the other agent. $v_\alpha$ and $v_\beta$ are as before the starting nodes of agents $\alpha$ and $\beta$.

The two basic building blocks of our algorithm are \ZWalk\ and \EarlyStopPathRulingSet, respectively \cref{alg:doubling-walk} and \cref{alg:earlystoppathrulingset}.
\ZWalk\ is simply a procedure in which the agent executing it scans the $2L+1$ nodes of the distance-$L$ neighborhood of the node it starts the \ZWalk\ from, in $4L$ rounds.
Note that while the agents do not have a common orientation, they have a sense of direction in that they know the port from which they enter a node. This allows the agents to keep walking in the same direction for successive steps, and to change direction, as desired.
Most moves taken during our algorithm are done in executions of \ZWalk.
\EarlyStopPathRulingSet\ is the early-stopping algorithm for computing colored ruling sets presented in the previous section.

\begin{algorithm}[ht]
\caption{\ZWalk($L$), agent $\alpha$ moves on the path up to distance $L$ from $v$ on both sides}
\label{alg:doubling-walk}
\begin{algorithmic}[1]
\State Agent $\alpha$ takes $L$ steps in one direction.
\State Agent $\alpha$ takes $2L$ steps in the opposite direction.
\State Agent $\alpha$ takes $L$ steps in the first direction, getting back to its original position.
\end{algorithmic}
\end{algorithm}

\begin{algorithm}[ht]
\caption{Rendezvous algorithm for agent $\alpha$ starting at node $v_\alpha$}
\label{alg:main-algorithm}
\begin{algorithmic}[1]
    \Statey Let $\ESPRSdl > 0$ be a universal constant s.t.\ any node $v$ executing $\EarlyStopPathRulingSet(R)$\ computes its output in at most $\ESPRSdl R \log^* \ell(v)$ rounds.
    \State Initialize $L \gets 1$.
    \While{not(RendezVous)}
\State \ZWalk($L$). \Comment{Discovery phase}

    \State For each $R \in \set{4^j \mid j\in \naturals}$, let $S_R = \set{u  \in N^R(v_\alpha), N^{\ESPRSdl R \log^* (u)}(u) \subseteq N^L (v_\alpha)}$

    \If{$\exists R \in \set{4^j \mid j\in \naturals}$, $R\leq L/16$, s.t.\ $S_R \neq \emptyset$}
        \State Let $\Rstar = \max \set{R \in \set{4^j \mid j\in \naturals}: R \leq L/16 \wedge S_R \neq \emptyset}$.

        \State Simulate \EarlyStopPathRulingSet$(\Rstar)$ for all nodes $u \in S_\Rstar$.
    
        \State Let $r$ be the known colored ruling set node closest to $v_\alpha$, breaking ties with IDs.

        \State $\SearchingWalk(\Rstar,L,r,\chi(r))$. \Comment{Searching phase}
    \Else   \Comment{Not enough nodes known to run \EarlyStopPathRulingSet}
         \State    Wait for $24L$ rounds.
    \EndIf

\State $L \gets 2L$.
\EndWhile
\end{algorithmic}
\end{algorithm}

Our algorithm (\cref{alg:main-algorithm}) consists of a loop with a parameter $L$ that doubles between iterations. Inside the loop are two phases, which we call the \emph{discovery phase} and the \emph{searching phase}. In the discovery phase, an agent explores the distance-$L$ neighborhood around their starting position, roughly doubling how much of the network they know about. The agents use this information to compute a colored ruling set of parameter $R$, where $R \leq L/16$ is the largest power of $4$ s.t.\ the distance-$O(R)$ neighborhood of their starting location contains a ruling-set node. Of course, as the agents have different views, they do not necessarily end up computing a ruling-set with the same parameter $R$, and we denote by $R_{\alpha,L}$ and $R_{\beta,L}$ the distance of the ruling sets computed by agents $\alpha$ and $\beta$ in their phase $L$. The agents use their respective computed ruling set to search for one another, using procedure \SearchingWalk\ (\cref{alg:searching-walk}), which constitutes the searching phase.

In \SearchingWalk\ (\cref{alg:searching-walk}), each agent goes to a nearby colored node of the computed ruling set.
Once at this node, the agents alternate calls to \ZWalk\ and waiting periods, according to the color of the ruling set node.
The fact that close enough ruling set nodes have distinct colors guarantees that the two agents, if they have computed a ruling set with the same distance $R \geq D$, will have different behaviors at some point. That is, one of the agents will be performing a \ZWalk\ while the other is not moving. The agents achieve rendezvous as this happens.

\begin{algorithm}[ht]
\caption{$\SearchingWalk(R,L,r,c)$, in which agent $\alpha$ searches in the distance-$8R$ neighborhood of node $r$ close to its initial position $v_\alpha$}
\label{alg:searching-walk}
\begin{algorithmic}[1]
\State Move to node $r$. \Comment{at distance $\leq 2R-1$ from $v_\alpha$}
\State Wait for $L - \dist(v_\alpha,r)$ rounds. \Comment{$\geq 0$ since $L \geq 16R$}
\State \ZWalk$(8R)$. \Comment{$32R$ rounds}
\For{$i$ from $0$ to $4$}  \Comment{iterating over the bits of $c = \chi(r)\in \set{0,1}^5$}
\If{$c_i=1$}
\State \ZWalk$(8R)$ twice. \Comment{$2 \times 32R$ rounds}
\Else \Comment{$c_i=0$} 
\State Wait for $64R$ rounds.
\EndIf
\EndFor
\State Wait for $11\cdot(2L - 32R)$ rounds. \Comment{$\geq 0$ since $L \geq 16R$}
\State Wait for $L - \dist(v_\alpha,r)$ rounds.
\State Move back to node $v_\alpha$.
\end{algorithmic}
\end{algorithm} 

\subsection{Analysis Overview}

Missing from the previous section's rough overview is notably how the argument is robust to the fact that agents can have a delay of $\tau$ between their wake-up times.
We now explain how, in the next sections, we analyze our algorithm and prove \cref{thm:main-result}, that is, that the agents achieve rendezvous in $O(D \log^* \lmin)$ rounds when executing \cref{alg:main-algorithm}.

\subparagraph{Organization of the proof.}
To prove \cref{thm:main-result}, we first bound in \cref{sec:distinct-phase-rdv} how out-of-sync the two agents can be without trivially achieving rendezvous. Knowing that a delay $\tau$ that exceeds some threshold $\Theta(D)$ necessarily implies that the agents achieve rendezvous in $O(D)$ rounds, the rest of the analysis can assume some degree of synchronization between the agents.

We then argue that as the agents reach a large enough value of $L$ (of order $\Theta(D \log^* \lmin)$) in their executions of \cref{alg:main-algorithm}, they compute nodes from colored ruling sets parameterized by distances of order at least $\Omega(D)$. 
When using such ruling set nodes in their executions of \SearchingWalk\ (\cref{alg:searching-walk}), the agents can achieve rendezvous in two different ways. If the two agents have computed nodes from distinct ruling sets (i.e., the distance parameters of their ruling sets do not match), the agent with the larger distance is guaranteed to find the other agent.
This follows from the agent with the smaller distance parameter being constrained to some interval on the line, which we show gets fully explored by the agent with the larger parameter.
We show this in \cref{sec:mismatched-set-rdv}. The analysis for when the two agents computed nodes from the same ruling set (i.e., with the same distance parameter) is in \cref{sec:common-set-rdv}. There, we argue that rendezvous is easily achieved if the two agents decided to go to the same node of the ruling set, and if not, they find each other from the fact that they perform \SearchingWalk\ from nodes that are both close enough to one another, and are colored with distinct colors.
The various arguments are put together in \cref{sec:algorithm-analysis}.

\subparagraph{Notation and preliminary remarks.}
We use the following notation:
\begin{itemize} 
    \item For each $L\in \set{2^i \mid i \in \naturals}$, $T_{\alpha,L}$ is the round in which agent $\alpha$ starts its discovery phase with parameter $L$ (line $3$ in \cref{alg:main-algorithm}). $T_{\beta,L}$ is the equivalent for agent $\beta$.
    \item $\Trdv \in \naturals \cup \set{\infty}$ is the round in which the agents first achieve rendezvous ($\infty$ if the agents never do so). 
    \item As previously in the paper, we identify the infinite line with the relative integers. For each $t\in \naturals$, let $x(t)$ (resp.\ $y(t)$) be the position of agent $\alpha$ (resp.\ $\beta$) in round $t$. We assume $\alpha$ to be the first agent to wake up, so that $y(t) = y(0) = v_\beta$ for all $t \leq \tau$.
    \item For each $L\in \set{2^i \mid i \in \naturals}$, let $R_{\alpha,L} \in (\set{4^i \mid i \in \naturals}\cap [1,L/16]) \cup \set{\bot}$ be the value of $\Rstar$ for agent $\alpha$ for the loop iteration of parameter $L$. $R_{\beta,L}$ represents the same quantity for $\beta$, and the value $\bot$ represents the fact that no valid $\Rstar$ was found for this loop iteration.  
\end{itemize}

A useful remark about our algorithm is that the discovery and searching phases of our agents have fixed runtimes. 

\begin{proposition}
    \label{prop:phase-runtimes}
    For each $L \in \set{2^i \mid i \in \naturals}$, we have:
    \begin{enumerate}
        \item $T_{\beta,L} = T_{\alpha,L} + \tau$,
        \item $T_{\alpha,L} = 28(L-1)$.
    \end{enumerate}
\end{proposition}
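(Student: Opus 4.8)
The plan is to prove both items by a direct bookkeeping of the number of rounds consumed by one iteration of the main loop of \cref{alg:main-algorithm}, regardless of which branch of the conditional is taken. First I would observe that item~1 is immediate from the symmetry of the two agents: both run the identical deterministic algorithm, and agent $\beta$ simply wakes up $\tau$ rounds after agent $\alpha$; since neither the discovery nor the searching phase depends on the absolute round number, every event in $\beta$'s execution is a copy of the corresponding event in $\alpha$'s execution shifted by $\tau$, so in particular $T_{\beta,L} = T_{\alpha,L} + \tau$. (One should note this holds only \emph{before rendezvous}; but $T_{\alpha,L}, T_{\beta,L}$ are defined as the rounds those phases would start, so the identity is fine as stated.)

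For item~2, the key claim is that one full iteration of the \texttt{while} loop with parameter $L$ takes exactly $28L$ rounds, independently of whether the \texttt{if} or the \texttt{else} branch is executed. I would verify this by summing the costs: the discovery phase is a single call to $\ZWalk(L)$, which by definition takes $4L$ rounds. In the \texttt{else} branch, the agent then waits $24L$ rounds, for a total of $4L + 24L = 28L$. In the \texttt{if} branch, the agent instead runs $\SearchingWalk(\Rstar, L, r, \chi(r))$, and I would check from \cref{alg:searching-walk} that this also takes exactly $24L$ rounds: the ``move to $r$'' plus ``move back to $v_\alpha$'' contribute $\dist(v_\alpha, r)$ each, the two ``wait for $L - \dist(v_\alpha, r)$'' lines contribute $L - \dist(v_\alpha,r)$ each (so these four lines sum to $2L$), the $\ZWalk(8R)$ contributes $32R$, the bit-loop over the five bits of $c$ contributes $64R$ per bit whether the bit is $0$ (wait $64R$) or $1$ (two $\ZWalk(8R)$'s, $2\cdot 32R$), i.e. $5 \cdot 64R = 320R$ total, and the line ``wait for $11(2L - 32R)$'' contributes $22L - 352R$; summing $2L + 32R + 320R + 22L - 352R = 24L$. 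Crucially the $R$-dependence cancels, so $\SearchingWalk$ always costs $24L$ and the iteration costs $4L + 24L = 28L$, matching the \texttt{else} branch. The key observation worth highlighting is precisely this cancellation, engineered so that the two branches have equal duration.

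Given the per-iteration cost, item~2 follows by a short induction on the powers of $2$. Agent $\alpha$ wakes up in round $0$ and begins iteration $L = 1$; its discovery phase with $L=1$ starts at step~4, which occurs in round $T_{\alpha,1} = 0$. Wait — actually step~3 (\ZWalk) precedes step~4 in the listing, but $T_{\alpha,L}$ is defined as the round agent $\alpha$ \emph{starts its discovery phase with parameter $L$}, i.e. step~3 of that iteration; so $T_{\alpha,1}$ is the round the first $\ZWalk(1)$ begins, namely round $0$. Hmm, but $28(1-1) = 0$, consistent. For the inductive step, iteration $L$ begins at round $T_{\alpha,L}$, consumes $28L$ rounds as just argued, and then $L$ is doubled, so iteration $2L$ begins at round $T_{\alpha,L} + 28L$; hence $T_{\alpha,2L} = T_{\alpha,L} + 28L$. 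With $T_{\alpha,1} = 0$ and $L$ ranging over $1, 2, 4, \dots$, this telescopes to $T_{\alpha,L} = \sum_{L' < L, L' \text{ a power of }2} 28 L' = 28(L - 1)$, giving the claim.

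The main obstacle is nothing deep but rather making sure the arithmetic in $\SearchingWalk$ is airtight — in particular confirming that the ``wait for $L - \dist(v_\alpha, r)$'' quantities are nonnegative (guaranteed by $L \geq 16R$ and $\dist(v_\alpha,r) \leq 2R - 1$, which the algorithm's comments assert) and that the bit-loop contributes the same $64R$ per iteration in both cases, so that the total is genuinely independent of $\Rstar$, $r$, and $\chi(r)$. Once that cancellation is checked, everything else is routine summation and a one-line induction.
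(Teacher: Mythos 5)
Your proposal is correct and follows essentially the same route as the paper: verify that each loop iteration takes exactly $28L$ rounds because the $R$-dependence in \SearchingWalk\ cancels (so both branches cost $24L$ after the $4L$-round discovery phase), then telescope $T_{\alpha,2L}=T_{\alpha,L}+28L$ from $T_{\alpha,1}=0$, with item~1 coming from the determinism of the algorithm and the $\tau$-shifted wake-up. The bookkeeping you spell out matches the paper's (which merely groups the terms as $2L$ for travel/wait to and from $r$ plus $11\times 2L$ for the \ZWalk/wait slots), so no changes are needed.
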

\begin{proof}
    The loop iteration of \cref{alg:main-algorithm} with a given value of $L$ takes $28L$ rounds. Indeed, the discovery phase is a \ZWalk\ of parameter $L$, taking $4L$ rounds. Whether some $\Rstar = 4^j$ is found such that
    $\exists u  \in N^\Rstar(v_\alpha), N^{\ESPRSdl \Rstar \log^* (u)}(u) \subseteq N^L (v_\alpha)$ or not, the rest of the loop always takes $24L$ rounds. Either these rounds are spent waiting at $v_\alpha$, or they are spent executing \SearchingWalk\ (\cref{alg:searching-walk}). \SearchingWalk\ has this complexity that is a function of $L$ only because each of its actions  whose runtime increases with $R$ is compensated by a waiting period whose time decreases with $R$ by the exact same amount. $2\times L$ rounds are spent traveling and waiting between $v_\alpha$ and $r$, and $11\times 2L$ are spent in $11$ calls to $\ZWalk(8R)$ and/or waiting periods, for a total of $24L$ rounds.

    The results follow from the just established formulas $T_{\alpha,2L} = T_{\alpha,L} + 28L$ and $T_{\beta,2L} = T_{\beta,L} + 28L$, together with the base case $T_{\beta,1} = T_{\alpha,1} + \tau$.
\end{proof}

\subsection{First Argument: Out-of-Sync Rendezvous}
\label{sec:distinct-phase-rdv}

In this section, we show that the agents achieve rendezvous relatively fast and easily when the delay $\tau$ is large enough. For a high-level intuition, assume that $\tau$ is very large compared to $D$. A property of our algorithm is that after $t$ rounds of being active, an agent has explored a neighborhood of size $\Theta(t)$ around its starting location.
With a distance of $D$ between the two agents' starting locations, this means that if after $\Theta(D)$ rounds of being active, agent $\alpha$ has reached the starting location of agent $\beta$. If $\beta$ is still asleep as this happens, we have that agent $\alpha$ finds it as it reaches $v_\beta$ for the first time. We essentially show an extension of this for smaller values of $\tau$.

We make some simple observations about the positions of the agents, from which it immediately follows that a delay of $10D$ or more makes the agents achieve rendezvous in $O(D)$ rounds.

\begin{proposition}
    \label{prop:agent-confinement-and-coverage}
    Let $L = 2^i$ for some integer $i$.
    \begin{enumerate}
        \item For all $t \leq T_{\alpha,L} + L/2$, $\abs{x(t)-v_\alpha} \leq L/2$.
        \item For all $z \in [v_\alpha-L,v_\alpha+L]$, $\exists t \in [T_{\alpha,L},T_{\alpha,L}+3L]$ s.t.\ $x(t) = z$.
    \end{enumerate}
\end{proposition}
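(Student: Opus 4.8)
Both claims are purely about the motion of a single agent (agent $\alpha$), so I would work entirely from the structure of \cref{alg:main-algorithm} and \cref{alg:doubling-walk} (\ZWalk), using the timing information of \cref{prop:phase-runtimes}. The key structural fact is that from round $T_{\alpha,L}$, agent $\alpha$ begins its loop iteration with parameter $L$ by executing $\ZWalk(L)$: it walks $L$ steps in one direction, $2L$ steps in the other, then $L$ steps back, returning to $v_\alpha$ at round $T_{\alpha,L} + 4L$. During this \ZWalk, the agent is always within distance $L$ of $v_\alpha$, and during its first $L$ steps (rounds $T_{\alpha,L}$ through $T_{\alpha,L}+L$) it is within distance $L/2$ of $v_\alpha$ for at least the first $L/2$ of those rounds.

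For the first claim, I would argue that $x(t)$ stays within $L/2$ of $v_\alpha$ for all $t \le T_{\alpha,L} + L/2$ by induction on $i$ (equivalently on $L$). The base case is $L = 1$, where $T_{\alpha,1} = 0$ by \cref{prop:phase-runtimes}, and the claim for $t \le 1/2$ reduces to $t = 0$, i.e.\ $x(0) = v_\alpha$, which holds. For the inductive step, suppose the claim holds for $L$; then at round $T_{\alpha,L} + L/2$ the agent is within $L/2$ of $v_\alpha$. The crucial point is that between $T_{\alpha,L}+L/2$ and $T_{\alpha,2L} = T_{\alpha,L} + 28L$, the agent finishes its $\ZWalk(L)$ (staying within distance $L \le 2L/2 = L$ of $v_\alpha$... here I need to be slightly careful: I want the bound $L$, and $L \le (2L)/2$, good), and then during the searching phase of iteration $L$ it may move, but only within the distance-$O(R_{\alpha,L}) \subseteq$ distance-$O(L)$ neighborhood — specifically, \SearchingWalk\ keeps the agent within distance $2R-1 + 20R \le L$ or so of $v_\alpha$ (using $R \le L/16$), and when no $R^*$ is found the agent simply waits at $v_\alpha$. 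Then at round $T_{\alpha,2L}$ the agent is exactly at $v_\alpha$, and for the first $L = (2L)/2$ steps of $\ZWalk(2L)$ it stays within distance $(2L)/2$ of $v_\alpha$. Chaining these intervals covers all $t \le T_{\alpha,2L} + (2L)/2$, giving the claim for $2L$. I should double-check the exact constants in \SearchingWalk\ to confirm the agent never exceeds distance $2L$ of $v_\alpha$ during iteration $L$'s searching phase; the line "searches in the distance $20R$-neighborhood of node $r$" with $\dist(v_\alpha,r) \le 2R-1$ and $R \le L/16$ gives distance $\le 22R \le 22L/16 < 2L$ from $v_\alpha$, so this is comfortable.

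For the second claim, I would just use that $\ZWalk(L)$ executed starting at round $T_{\alpha,L}$ scans exactly the $2L+1$ nodes $\{v_\alpha - L, \dots, v_\alpha + L\}$: the first $L$ steps reach one endpoint at round $T_{\alpha,L}+L$, the next $2L$ steps sweep across to the other endpoint at round $T_{\alpha,L}+3L$, covering every integer in $[v_\alpha - L, v_\alpha+L]$ along the way. So for every $z$ in that interval there is some $t \in [T_{\alpha,L}, T_{\alpha,L}+3L]$ with $x(t) = z$. This is essentially immediate from the definition of \ZWalk\ once one observes that the agent is at $v_\alpha$ at round $T_{\alpha,L}$ — which follows from the fact that each loop iteration begins and ends with the agent at $v_\alpha$ (the $\ZWalk$ returns to origin, and \SearchingWalk\ ends with "Move back to node $v_\alpha$", and the waiting branch stays at $v_\alpha$).

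\textbf{Expected main obstacle.} The only place requiring care is the inductive step of claim 1: I must verify that during the entire iteration with parameter $L$ (from $T_{\alpha,L}$ to $T_{\alpha,2L}$), and in particular during the searching phase, agent $\alpha$ never strays farther than distance $2L$ from $v_\alpha$, so that the bound "$\le (2L)/2$" can be picked up at the start of the next \ZWalk. This amounts to bookkeeping the constants in \SearchingWalk\ against the constraint $R \le L/16$; it is routine but is the one step where an off-by-a-constant error could break the induction, so I would state the relevant distance bound on \SearchingWalk\ explicitly (perhaps as a small auxiliary observation) before running the induction.
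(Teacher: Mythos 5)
Your part 2 is exactly the paper's argument, and your part 1 differs only cosmetically (you run an induction on $L$, whereas the paper argues directly that all movements in iterations with parameter $L' \leq L/2$ stay within $L/2$ of $v_\alpha$ and that one step per round handles $t \in [T_{\alpha,L}, T_{\alpha,L}+L/2]$). However, as written your inductive step has a genuine gap in the one place you flagged, and the threshold you propose to verify would not repair it. Your claim for parameter $2L$ asserts $\abs{x(t)-v_\alpha} \leq L$ for \emph{all} $t \leq T_{\alpha,2L} + L$, and this range includes every round of iteration $L$'s searching phase. So to close the induction you must show the agent stays within distance $L$ of $v_\alpha$ throughout that searching phase — not within $2L$, which is the target you state in your "expected main obstacle" paragraph. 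Moreover, the bound you invoke, $2R-1+20R \leq 22R$ (taken from the "$20R$-neighborhood" phrasing in the caption of \cref{alg:searching-walk}), is $\leq 22L/16 > L$ when $R = L/16$, so it is too weak even for the intermediate times, and checking "$\leq 2L$" as you propose would leave the claim for $2L$ unproved at those rounds.

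The gap is easily repaired, and the paper's proof does exactly this: inside \SearchingWalk\ the only excursions are calls to $\ZWalk(8R)$ launched from the ruling-set node $r$ with $\dist(v_\alpha,r) \leq 2R-1$, so the agent never exceeds distance $2R-1+8R = 10R-1 < L$ from $v_\alpha$ (using $R \leq L/16$); the "$20R$" in the caption is not the radius of the movements. With the bound $10R-1 < L$ in place of $22R$, your chaining of the intervals $[0,T_{\alpha,L}+L/2]$, $[T_{\alpha,L}+L/2, T_{\alpha,2L}]$, and $[T_{\alpha,2L}, T_{\alpha,2L}+L]$ goes through and yields the statement, matching the paper's conclusion.
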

\begin{proof}
    For the first inequality, first note that by definition $x(T_{\alpha,L}) = x(0) = v_\alpha$ since the agent always returns at $v_\alpha$ before starting a new loop iteration. This immediately implies that $\abs{x(t)-x(0)} \leq L/2$ for $t \in [T_{\alpha,L},T_{\alpha,L}+L/2]$, as agents can only cross one edge per round.
    For smaller $t$ ($t \in [0,T_{\alpha,L}]$, let us consider the movements done in loop iterations for smaller values of $L$. As an agent executes a \ZWalk\ with some parameter $d$, it stays at distance $\leq d$ from the node where it started its execution. As a results, all calls to \ZWalk\ as part of previous discovery phases have kept the agent at distance $\leq L/2$ from $x(0)=v_\alpha$.
    As for the movements done in the searching phase, the agent moves at distance at most $2\Rstar-1+8\Rstar < L$ from $v_\alpha$. Indeed, at most $2\Rstar-1$ are done to go to a nearby ruling set node, from which \ZWalk\ is called with parameter $8\Rstar$, and $\Rstar \leq L/16$.
    
    The second item simply follows from the fact that a \ZWalk\ of parameter $L$ consists of going $L$ steps away from the starting position, then taking $2L$ steps in the other direction. Therefore, either $x(T_{\alpha,L}+L) = v_\alpha - L$ and $x(T_{\alpha,L}+3L) = v_\alpha + L$, or $x(T_{\alpha,L}+L) = v_\alpha + L$ and $x(T_{\alpha,L}+3L) = v_\alpha - L$. In either case, the entire interval $[v_\alpha-L,v_\alpha+L]$ is swept.
\end{proof}

\begin{lemma}
    \label{lem:out-of-sync-rendezvous}
    Let $L = 2^{\ceil{\log D}+1}$. Either $\tau < 5L/2 \leq 10D$, or $\Trdv \leq T_{\alpha,L} + 3L \in O(D)$.
\end{lemma}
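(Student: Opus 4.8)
The plan is to split on the size of the delay $\tau$ and, in the case $\tau \geq 5L/2$, use the fact that by the time agent $\alpha$ reaches the starting location $v_\beta$ of agent $\beta$, agent $\beta$ is still asleep (so sitting at $v_\beta$), which yields rendezvous. Concretely, assume $\tau \geq 5L/2$. Recall $L = 2^{\ceil{\log D}+1}$, so that $L/2 \geq D$ and $L < 4D$, hence $L \in \Theta(D)$. Since $\alpha$ wakes up in round $0$ and $\beta$ in round $\tau$, we have $y(t) = v_\beta$ for all $t \leq \tau$.

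First I would locate a round in which $\alpha$ visits $v_\beta$. By the second item of the Proposition immediately preceding the lemma, applied with this value of $L$: for every $z \in [v_\alpha - L, v_\alpha + L]$ there is a round $t \in [T_{\alpha,L}, T_{\alpha,L} + 3L]$ with $x(t) = z$. Since $\dist(v_\alpha, v_\beta) = D \leq L/2 \leq L$, the node $v_\beta$ lies in this interval (under the canonical identification of the line with $\integers$), so there is a round $t^\star \in [T_{\alpha,L}, T_{\alpha,L} + 3L]$ with $x(t^\star) = v_\beta$.

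Next I would check that $\beta$ is still asleep in round $t^\star$, i.e.\ $t^\star \leq \tau$. By Proposition~\ref{prop:phase-runtimes}, $T_{\alpha,L} = 28(L-1) < 28L$; this is the weak point of the argument and where the constant $5L/2$ in the statement looks too small for a naive bound, so the right move is \emph{not} to bound $t^\star$ by $T_{\alpha,L}+3L$ directly but to use the first item of the preceding Proposition, which says $|x(t) - v_\alpha| \leq L/2$ for all $t \leq T_{\alpha,L} + L/2$. Hence within the discovery phase of parameter $L$, agent $\alpha$ is already within distance $L/2 \geq D$ of $v_\alpha$, and in fact the \ZWalk$(L)$ of this discovery phase sweeps the whole interval $[v_\alpha - L, v_\alpha + L]$ in its first $3L$ rounds; refining, it reaches any point at distance $\leq L/2$ from $v_\alpha$ within its first $L/2 + D \leq L$ rounds after $T_{\alpha,L} + L/2$, actually within $T_{\alpha,L} + 3L/2$ rounds total. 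So I can take $t^\star \leq T_{\alpha,L} + 3L/2$. Combined with $T_{\alpha,L} < 28L$, this still does not give $t^\star \leq 5L/2$; the resolution must be that the quantity the lemma truly needs is only that $\Trdv \leq T_{\alpha,L} + 3L$ whenever $\tau \geq 5L/2$, and the relevant comparison is between $\tau$ and the \emph{time since $T_{\alpha,L}$}, not since round $0$ — one should re-read the definition of $T_{\beta,L} = T_{\alpha,L} + \tau$ and note agent $\beta$ begins its own \ZWalk$(L)$ discovery only at $T_{\beta,L}$. Until then $\beta$ is executing earlier, smaller loop iterations; but in those it only ever moves within distance $L/4 < D$... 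I expect the clean argument is: if $\tau \geq 5L/2$, then up to round $T_{\alpha,L} + 3L \leq T_{\alpha,L} + \tau = T_{\beta,L}$ (using $3L \leq 5L/2$? — no), so the precise inequality to verify is $3L \leq \tau$, which follows from $\tau \geq 5L/2$ only after also accounting that $\beta$'s position stays at $v_\beta$ through all of its pre-$L$ phases as well, giving the needed slack.

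So the key steps, in order: (1) reduce to the case $\tau \geq 5L/2$ and record $L/2 \geq D$, $L \leq 4D$; (2) use item~2 of the preceding Proposition to find $t^\star \in [T_{\alpha,L}, T_{\alpha,L}+3L]$ with $x(t^\star) = v_\beta$; (3) use item~1 of that Proposition together with $\dist(v_\alpha,v_\beta)\le L/2$ to bound $t^\star$ tightly enough that $t^\star \leq \tau$, i.e.\ $\beta$ has not yet woken and is sitting at $v_\beta = y(t^\star)$; (4) conclude $\Trdv \leq t^\star \leq T_{\alpha,L} + 3L \in O(D)$. The main obstacle is step~(3): making the timing bookkeeping precise so that the visit to $v_\beta$ provably occurs before round $\tau$, which requires carefully tracking that agent $\beta$ remains at $v_\beta$ throughout all loop iterations preceding round $\tau$ (trivially true since it is asleep) and extracting exactly the constant $5L/2$ from the worst-case offset between $T_{\alpha,L}$ and the sweep of $v_\beta$; I expect this to come down to the observation that $\alpha$ reaches $v_\beta$ during its \ZWalk$(L)$ within $2L$ rounds of $T_{\alpha,L}$ when $D \le L/2$, and $\tau \ge 5L/2$ comfortably exceeds $T_{\alpha,L} + 2L$ once one subtracts off the $28(L-1)$ offset — or, more likely, the intended reading is that "$\Trdv \le T_{\alpha,L}+3L$" already incorporates the offset and the only real content is "$3L \le \tau$", which is immediate from $\tau \ge 5L/2$ is false, so the honest proof must instead show the agents meet during $\alpha$'s loop iterations with parameter $\le L$ at the latest, bounded by $T_{\alpha,2L} = T_{\alpha,L} + 28L$, contradicting the stated bound — indicating I should trust the paper's Propositions literally and simply chain $\Trdv \le t^\star \le T_{\alpha,L} + 3L$ using that $y$ is constant on $[0,\tau] \supseteq [0, T_{\alpha,L}+3L]$ under the hypothesis $\tau \ge 5L/2 \ge T_{\alpha,L}+3L$, whose verification $T_{\alpha,L} + 3L \le 5L/2$ fails and therefore must rely on a sharper bound $t^\star \le T_{\alpha,L} + (\text{something} \le \tau - T_{\alpha,L})$ that I will obtain from item~1 of the Proposition; resolving this discrepancy is the crux of the proof.
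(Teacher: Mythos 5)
Your proposal has a genuine gap, which you flag yourself but never resolve: the entire plan rests on agent $\beta$ being still asleep (hence sitting exactly at $v_\beta$) in the round $t^\star$ when $\alpha$ first visits $v_\beta$, i.e., on $t^\star \leq \tau$. That is false in the relevant regime: by \cref{prop:phase-runtimes}, $T_{\alpha,L} = 28(L-1)$, so any visit of $\alpha$ to $v_\beta$ during the phase-$L$ window occurs around round $28L$, whereas the case under consideration only assumes $\tau \geq 5L/2$. For $\tau$ just above $5L/2$, agent $\beta$ has been awake for roughly $28L$ rounds by round $t^\star$, is deep into its earlier loop iterations, and need not be at $v_\beta$ at that moment; no tightening of the bound on $t^\star$ can rescue the claim that $\beta$ is asleep there. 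Your closing paragraph acknowledges exactly this discrepancy and leaves it as ``the crux,'' so the proposal does not constitute a proof.

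The idea you are missing (and the one the paper uses) is that one does not need $\beta$ asleep, only \emph{confined}: item~1 of the proposition preceding the lemma applies to $\beta$ as well, with $T_{\beta,L}$ in place of $T_{\alpha,L}$, giving $\abs{y(t) - v_\beta} \leq L/2$ for all $t \leq T_{\beta,L} + L/2 = T_{\alpha,L} + \tau + L/2$. The hypothesis $\tau \geq 5L/2$ is precisely what makes $T_{\alpha,L} + \tau + L/2 \geq T_{\alpha,L} + 3L$ (this is where the constant $5/2$ comes from). Hence throughout the window $[T_{\alpha,L}, T_{\alpha,L}+3L]$, agent $\beta$ stays within distance $L/2$ of $v_\beta$, and since $D \leq L/2$, within distance $D + L/2 \leq L$ of $v_\alpha$. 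Meanwhile, by item~2 of the same proposition, $\alpha$ sweeps the entire interval $[v_\alpha - L, v_\alpha + L]$ during that window, so scenario~2 of \cref{prop:no-position-exchange} (a sweep catching an agent confined to a subinterval, rather than exact co-location at $v_\beta$) yields $\Trdv \leq T_{\alpha,L} + 3L = 28(L-1) + 3L \in O(D)$, using $L < 4D$. Note also that your argument never invokes \cref{prop:no-position-exchange} at all; insisting on $x(t^\star) = y(t^\star) = v_\beta$ is both stronger than needed and unattainable here.
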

\begin{proof}
    Note that $L \in [2D,4D-1]$. If $\tau \geq 5L/2$, then for all $t \leq  T_{\alpha,L} + 3L \leq T_{\alpha,L} + \tau + L/2 = T_{\beta,L} + L/2  $, $\abs{y(t) - v_\beta} \leq L/2$ by \cref{prop:agent-confinement-and-coverage}.
    Because $\abs{v_\alpha - v_\beta} = D$, this also implies that $\abs{y(t) - v_\alpha} \leq D + L/2 \leq L$ for all $t \leq T_{\alpha,L} + 3L$.
    Since $\alpha$ explores the entire distance-$L$ neighborhood of $v_\alpha$ between $T_{\alpha,L} + L$ and $T_{\alpha,L} + 3L$ by \cref{prop:agent-confinement-and-coverage}, the two agents achieve rendezvous before round $T_{\alpha,L} + 3L$ by \cref{prop:no-position-exchange}. 
\end{proof}

\subsection{Second Argument: In-Sync Rendezvous}
\label{sec:same-phase-rdv}

Suppose now that the delay between the agents is less than $10D$. This implies that for sufficiently large values of $L$ (greater than some $\Theta(D)$), the discovery and searching phases of parameter $L$ of the two agents mostly overlap. We argue now that in this case, the agents can rely on the searching phase to achieve rendezvous, once $L$ has reached a large enough value (of order $\Theta(D \log^* \lmin)$).

Recall that $R_{\alpha,L}$ and $R_{\beta,L}$ are the values $\Rstar$ computed by the two agents in the loop iteration of parameter $L$.
$R_{\alpha,L}$ is the maximal value $R \in \set{4^i \mid i \in \naturals} \cap [0,L/16]$ such that for some node $u$ within distance $R$ from $v_\alpha$, the output of $u$ in $\EarlyStopPathRulingSet(R)$ can be inferred from nodes within distance $L$ from $v_\alpha$. $R_{\alpha,L}=\bot$ indicates that no such value $R$ exists for the loop iteration $L$. $R_{\beta,L}$ is the same variable for agent $\beta$.

The output of a node $u$ in $\EarlyStopPathRulingSet(R)$ includes a node from a $(R,R-1)$-ruling set within distance $R-1$ from $u$ (possibly $u$ itself), by \cref{lem:earlystoppathrulingset}. In a loop iteration where $R_{\alpha,L}\neq \bot$, this means that $\alpha$ has learned about a node from the $(R_{\alpha,L},R_{\alpha,L}-1)$-ruling set computed by algorithm $\EarlyStopPathRulingSet(R_{\alpha,L})$ at distance at most $2R_{\alpha,L}-1$ from $v_\alpha$.
We denote $r_{\alpha,L}$ this node, and $c_{\alpha,L}$ the color it got imparted in $\EarlyStopPathRulingSet(R_{\alpha,L})$.

Recall that $\lmin$ is the smallest label within distance $D$ from one of the agents' starting locations, i.e., 
$\min\set{\ell(u) \mid u \in N^D(v_\alpha) \cup N^D(v_\beta)}$,
and that $\ESPRSdl$ is a universal constant s.t.\ a node $u$ of label $\ell(u)$ executing $\EarlyStopPathRulingSet(R)$ computes its output in at most $\ESPRSdl R \log^* \ell(u)$ rounds.

\begin{lemma}
    \label{lem:guaranteed-ruling-set}
    If $L \geq 8\ESPRSdl D \log^* \lmin + 128D$ and $L \in \set{2^i \mid i \in \naturals}$, we have $R_{\alpha,L} \geq 2D$ and $R_{\beta,L} \geq 2D$.    
\end{lemma}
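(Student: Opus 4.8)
The plan is to show that when $L$ is at least $(4\ESPRSdl+16)D\lmin$, the agent $\alpha$ necessarily finds a valid value $\Rstar \geq D$ in step $6$ of \cref{alg:main-algorithm}. By symmetry the same argument applies to $\beta$. Recall that $R_{\alpha,L}$ is the largest power of $4$ that is at most $L/16$ such that $S_R \neq \emptyset$, where $S_R = \set{u \in N^R(v_\alpha) : N^{\ESPRSdl R \log^*(u)}(u) \subseteq N^L(v_\alpha)}$. So it suffices to exhibit \emph{some} power of $4$, call it $\Rstar_0$, with $D \leq \Rstar_0 \leq L/16$ and $S_{\Rstar_0} \neq \emptyset$; then $R_{\alpha,L} \geq \Rstar_0 \geq D$.

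First I would fix $\Rstar_0$ to be the smallest power of $4$ that is $\geq D$, so that $D \leq \Rstar_0 < 4D$. The upper bound $\Rstar_0 \leq L/16$ follows since $\Rstar_0 < 4D \leq L/((\ESPRSdl+4)\lmin) \leq L/16$ using $\lmin \geq 1$ and $\ESPRSdl \geq 0$ (indeed $\ESPRSdl + 4 \geq 16$ would be needed; more carefully, from $L \geq (4\ESPRSdl+16)D\lmin \geq 16D$ we get $4D \leq L/4 \leq L/16$ only if... — I'd simply use $\Rstar_0 < 4D$ and $L \geq 16D$, wait that gives $\Rstar_0 < 4D \leq L/4$, not $L/16$; so actually one should take $\Rstar_0$ to be the largest power of $4$ with $\Rstar_0 \leq L/16$, and separately verify $\Rstar_0 \geq D$, which holds because $L/16 \geq (\ESPRSdl/4+1)D\lmin \geq D$, and a power of $4$ at least... hmm, the largest power of $4$ below $L/16$ could be as small as $(L/16)/4 = L/64$, still $\geq D$ when $L \geq 64D$, which holds since $4\ESPRSdl + 16 \geq 64$ iff $\ESPRSdl \geq 12$ — not guaranteed). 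The clean approach: take $\Rstar_0$ = smallest power of $4$ that is $\geq D$; then $\Rstar_0 \leq L/16$ needs $4D \leq L/16$, i.e. $L \geq 64D$. To be safe I would instead argue with the specific witness and accept whatever mild strengthening of the constant is needed, or note the constant $(4\ESPRSdl+16)$ should be read as large enough; the substance is the next step.

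The key step is to show $S_{\Rstar_0} \neq \emptyset$, i.e. to produce a node $u \in N^{\Rstar_0}(v_\alpha)$ whose output in $\EarlyStopPathRulingSet(\Rstar_0)$ is determined by nodes within distance $L$ of $v_\alpha$. By \cref{lem:earlystoppathrulingset}, running $\EarlyStopPathRulingSet(\Rstar_0)$, every node $u$ terminates within $O(\Rstar_0 \log^* \ell(u)) \leq \ESPRSdl \Rstar_0 \log^* \ell(u)$ rounds, hence its output depends only on its distance-$(\ESPRSdl \Rstar_0 \log^* \ell(u))$ neighborhood. Now let $u^*$ be a node of minimum label within distance $D$ of $v_\alpha$ — one exists with $\ell(u^*) \leq \lmin$ by definition of $\lmin$ (taking the min side to be $v_\alpha$; if the witness for $\lmin$ is near $v_\beta$ instead, use $\dist(u^*, v_\alpha) \leq \dist(u^*,v_\beta) + D \leq 2D$, and enlarge $\Rstar_0$-neighborhood bookkeeping accordingly — another place the constant absorbs slack). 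Then $u^* \in N^{\Rstar_0}(v_\alpha)$ since $\dist(v_\alpha, u^*) \leq D \leq \Rstar_0$. Its relevant neighborhood has radius $\ESPRSdl \Rstar_0 \log^* \ell(u^*) \leq \ESPRSdl \Rstar_0 \log^* \lmin$, so it is contained in $N^{\Rstar_0 + \ESPRSdl \Rstar_0 \log^* \lmin}(v_\alpha)$. It remains to check $\Rstar_0 + \ESPRSdl \Rstar_0 \log^* \lmin \leq L$. Using $\Rstar_0 < 4D$ and $\log^* \lmin \leq \lmin$, the left side is at most $4D(1 + \ESPRSdl \lmin) \leq 4D \lmin (1 + \ESPRSdl) \leq (4\ESPRSdl + 16)D\lmin \leq L$ — wait, $4D(1+\ESPRSdl\lmin) \leq 4D\lmin + 4D\ESPRSdl\lmin = (4 + 4\ESPRSdl)D\lmin$, which is indeed $\leq (4\ESPRSdl+16)D\lmin$. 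So $u^* \in S_{\Rstar_0}$, giving $S_{\Rstar_0} \neq \emptyset$ and hence $R_{\alpha,L} \geq \Rstar_0 \geq D$.

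The main obstacle I anticipate is bookkeeping the constants and the $\Rstar_0 \leq L/16$ upper-bound constraint cleanly: one must pick the witness power-of-$4$ distance $\Rstar_0$ so that simultaneously (i) $\Rstar_0 \geq D$, (ii) $\Rstar_0 \leq L/16$, and (iii) the $O(\Rstar_0 \log^* \lmin)$-radius ball around a min-label node stays inside $N^L(v_\alpha)$; the hypothesis $L \geq (4\ESPRSdl+16)D\lmin$ is exactly calibrated (up to the universal constants hidden in $\ESPRSdl$ and the choice of $\Rstar_0$) to make all three hold at once, using $\log^* \lmin \leq \lmin$ and $\lmin \geq 1$. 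The inequality $\log^* x \leq x$ is where the (mild) looseness of bounding $\log^*$ by the identity is absorbed; everything else is elementary. I would also state explicitly that the argument is symmetric in $\alpha \leftrightarrow \beta$ since $\lmin$ and the hypothesis on $L$ are symmetric, so $R_{\beta,L} \geq D$ follows identically.
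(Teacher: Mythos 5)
Your proposal is correct and is essentially the paper's own proof spelled out: the paper's entire argument is the single sentence that for the node of label $\lmin$, the agents can compute its output in an execution of $\EarlyStopPathRulingSet(R)$ for some $R \geq D$, which is exactly your witness argument with a power-of-$4$ choice of $R$ and the bound $\log^* \lmin \leq \lmin$. The constant bookkeeping you agonize over (the $R \leq L/16$ cap forcing $\ESPRSdl$ to be ``large enough'', and the witness possibly lying at distance up to $2D$ from one agent) is not addressed in the paper either and only affects the inessential constant $(4\ESPRSdl+16)$ in the hypothesis, since the main theorem only invokes the lemma for $L$ beyond some $\Theta(D\log^*\lmin)$.
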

\begin{proof}
    Let $v_{\min} \in N^D(v_\alpha) \cup N^D(v_\beta)$ be the node s.t.\ $\ell(v_{\min}) = \lmin$ and $R_D = 4^{\ceil{\log_4 (2D)}}$. Note that $2D \leq R_D < 8D$, $\dist(v_{\min},v_\alpha) \leq 2D \leq R_D$, and $\dist(v_{\min},v_\beta) \leq 2D \leq R_D$.

    Consider lines 5 and 6 in \cref{alg:main-algorithm}. $R_{\alpha,L}$ is the largest value that satisfies the following three properties: $R_{\alpha,L}$ is a power of $4$, $R_{\alpha,L} \leq L/16$, and agent $\alpha$ has discovered enough of the line in its discovery phase of parameter $L$ to compute the output of at least one node within distance $R_{\alpha,L}$ from $v_\alpha$ in an execution of $\EarlyStopPathRulingSet(R_{\alpha,L})$. $R_{\beta,L}$ is defined similarly for agent $\beta$. We show that all these properties hold for $R_D = 4^{\ceil{\log_4 (2D)}}$ with the lower bound on $L$ given as hypothesis, which implies the lemma.

    $R_D$ is a power of $4$ by definition. Also, since $L \geq 128D$ and $R_D < 8D$, we have $R_D \leq L/16$. Node $v_{\min}$ is at distance at most $2D \leq R_D$ from $v_\alpha$. Finally, consider the distance-$(\ESPRSdl R_D \log^* \lmin)$ neighborhood of $v_{\min}$. From the bound on the distance between $v_{\min}$ and $v_\alpha$, and $R_D < 8D$, it holds that $N^{\ESPRSdl R_D \log^* \lmin}(v_{\min}) \subseteq N^{8\ESPRSdl D \log^* \lmin + 2 D}(v_{\alpha}) \subseteq N^L(v_\alpha)$. This implies that $S_{R_D} \neq \emptyset$ for $S_{R_D}$ defined as in lines 5 and 6 of \cref{alg:main-algorithm}, and therefore, $R_{\alpha,L} \geq R_D \geq 2D$. The same holds for agent $\beta$.
\end{proof}

\subsubsection{Mismatched Ruling Sets Rendezvous}
\label{sec:mismatched-set-rdv}

We first consider the case where one of the agents has computed part of a ruling set of higher distance $\Rstar$ than the other agent. As $R_{\alpha,L}$ and $R_{\beta,L}$ are powers of $4$, $R_{\alpha,L} > R_{\beta,L}$ implies that $R_{\alpha,L} \geq 4 R_{\beta,L}$. This large difference between the two distances allows us to prove rendezvous by similar arguments as in the previous section (\cref{lem:out-of-sync-rendezvous}).

\begin{lemma}
    \label{lem:mismatched-rendezvous}
    Suppose that $R_{\alpha,L} \neq R_{\beta,L}$ and $\max(R_{\alpha,L},R_{\beta,L}) \geq D$. Then $\Trdv \leq T_{\alpha,L} + 7L \in O(L)$.
\end{lemma}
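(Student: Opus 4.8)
The plan is to mimic the structure of the proof of \cref{lem:out-of-sync-rendezvous}, exploiting the fact that when $R_{\alpha,L} \neq R_{\beta,L}$, the two values differ by a factor of at least $4$ (being powers of $4$), so the agent with the larger $\Rstar$ performs a significantly wider \SearchingWalk\ than the other. Without loss of generality assume $R_{\alpha,L} \geq 4 R_{\beta,L}$ and $R := R_{\alpha,L} \geq D$; the symmetric case is handled identically with the roles of $\alpha$ and $\beta$ swapped, which is legitimate since we only claim a bound of the form $\Trdv \leq T_{\alpha,L} + 7L = T_{\beta,L} - \tau + 7L$, and (using $\tau < 10D$ from the hypothesis of \cref{sec:same-phase-rdv} and $L \geq 16R \geq 16D$) a bound of $O(L)$ rounds after $T_{\beta,L}$ translates to an $O(L)$ bound after $T_{\alpha,L}$.

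First I would pin down the time interval during which agent $\alpha$ is executing the core \ZWalk$(8R)$ portions of its \SearchingWalk: after the travel-and-wait to node $r$ (which brings $\alpha$ to round $T_{\alpha,L} + L$, independent of $\dist(v_\alpha,r)$ by design), $\alpha$ performs at least one \ZWalk$(8R)$, sweeping the entire interval $[r - 8R,\, r + 8R]$ within rounds $[T_{\alpha,L}+L,\, T_{\alpha,L}+L+32R]$, and in fact keeps sweeping a sub-interval containing $r$ for many more rounds (the loop over the bits of $c$ plus the final waiting block, all of length $O(L)$). Since $r = r_{\alpha,L}$ is within distance $2R-1$ of $v_\alpha$ and $\abs{v_\alpha - v_\beta} = D \leq R$, the sweep interval $[r-8R, r+8R]$ contains every node within distance $8R - (2R-1) - D \geq 8R - 3R = 5R$ of $v_\beta$. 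Next I would bound how far agent $\beta$ can stray from $v_\beta$ during this same window: in its loop iteration of parameter $L$, agent $\beta$ executes a \SearchingWalk\ with parameter $R_{\beta,L} \leq R/4$, so by the argument already used in the proof of the proposition preceding \cref{lem:out-of-sync-rendezvous}, $\beta$ stays within distance $2R_{\beta,L}-1 + 8R_{\beta,L} < 10 R_{\beta,L} \leq 10R/4 < 3R$ of $v_\beta$ throughout that iteration — comfortably inside the interval $[r-8R, r+8R]$ that $\alpha$ is sweeping. (If $R_{\beta,L} = \bot$, then $\beta$ simply waits at $v_\beta$, which is even easier.)

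The remaining work is to check the timing overlap: I need a window of $\Theta(R)$ consecutive rounds that lies both inside $\alpha$'s sweeping phase and inside $\beta$'s phase-$L$ iteration, so that I can invoke \cref{prop:no-position-exchange} (scenario 2: $\alpha$ scans an interval that $\beta$ stays in). Since $\tau < 10D \leq 10R$, the two agents' phase-$L$ iterations — each of total length $28L \geq 28 \cdot 16R$ — overlap in all but a $10R$-prefix/suffix, and $\alpha$'s sweeping activity occupies an $\Omega(L)$-length sub-window of its iteration, so a $\Theta(R)$-length common window (indeed one of length $\geq 32R$) certainly exists; a short explicit computation with the round-counts from \cref{prop:phase-runtimes} and the structure of \SearchingWalk\ (\cref{alg:searching-walk}) shows it falls within $[T_{\alpha,L}, T_{\alpha,L}+7L]$. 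Applying \cref{prop:no-position-exchange} then gives rendezvous by round $T_{\alpha,L} + 7L$. The main obstacle I anticipate is purely bookkeeping: carefully tracking the exact rounds at which $\alpha$'s first \ZWalk$(8R)$ starts and finishes (it starts at $T_{\alpha,L}+L$, not at $T_{\alpha,L}$, because of the ``wait for $L - \dist(v_\alpha,r)$ rounds'' step) and confirming that this interval, once shifted by $\tau < 10D$, still lands inside $\beta$'s phase-$L$ iteration rather than spilling into its phase-$2L$ iteration — but since $L \geq 16R \geq 16D$ and $\tau < 10D$, there is ample slack.
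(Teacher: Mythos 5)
Your proposal follows the same route as the paper's proof: reduce to $\tau \le 10D$ via \cref{lem:out-of-sync-rendezvous}, let the agent with the larger parameter (WLOG $\alpha$, with $R := R_{\alpha,L} \ge 4R_{\beta,L}$ and $R \ge D$) sweep the interval $[r-8R,\,r+8R]$ during its first \ZWalk$(8R)$, note that this interval contains all nodes within distance $8R-(2R-1)-D > 5R$ of $v_\beta$ while the other agent is confined to distance $< 3R$ of $v_\beta$ (the paper gets $(5/2)R_{\alpha,L}$), and conclude with \cref{prop:no-position-exchange}, with the symmetric case costing an extra $\tau \le L$ and yielding the $7L$ bound. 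The idea and all the key quantities match the paper.

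There is, however, a bookkeeping error that, taken literally, breaks the containment step. The iteration of parameter $L$ opens with the $4L$-round discovery phase, so the travel-and-wait of \SearchingWalk\ ends at round $T_{\alpha,L}+5L$, not $T_{\alpha,L}+L$, and $\alpha$'s first \ZWalk$(8R)$ occupies $[T_{\alpha,L}+5L,\,T_{\alpha,L}+5L+32R]$. Correspondingly, it is false that $\beta$ stays within $< 3R$ of $v_\beta$ ``throughout that iteration'': during its own discovery phase (\ZWalk$(L)$, rounds $T_{\beta,L}$ to $T_{\beta,L}+4L$) it roams up to distance $L = 16R$ from $v_\beta$. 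Hence your stated overlap criterion --- a $\Theta(R)$ window inside $\alpha$'s sweep and inside $\beta$'s phase-$L$ \emph{iteration} --- is insufficient; the window must lie inside $\beta$'s \emph{searching} phase. This is precisely where $\tau \le 10D \le L$ is needed: $T_{\beta,L}+4L = T_{\alpha,L}+\tau+4L \le T_{\alpha,L}+5L$, so by the time $\alpha$ begins its \ZWalk$(8R)$, $\beta$ has completed its discovery phase and is confined near $v_\beta$ (or waits at $v_\beta$ if $R_{\beta,L}=\bot$, as you note). With this correction --- which is exactly how the paper argues --- your proof goes through and stays within the claimed $T_{\alpha,L}+7L$.
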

\begin{proof}
    Assume $\tau \leq 10D$ as otherwise \cref{lem:out-of-sync-rendezvous} already yields the claim.
    
    Let agent $\alpha$ have computed the ruling set of higher distance, i.e., $R_{\alpha,L} > R_{\beta,L}$ and $R_{\alpha,L} \geq D$.
    Note that $R_{\alpha,L} \leq L/16$, so $L \geq 16D \geq 8\tau/5$.
    In the first $L$ rounds of \SearchingWalk\ (rounds $T_{\alpha,L} + 4L$ to $T_{\alpha,L} + 5L$), $\alpha$ moves to the ruling set node $r_{\alpha,L}$ at distance at most $2R_{\alpha,L}-1$ from $v_\alpha$. It then performs a \ZWalk\ of parameter $8R_{\alpha,L}$.
    As a result, all the nodes within distance $8R_{\alpha,L} - (2R_{\alpha,L}-1) - D > 5R_{\alpha,L}$ from $v_\beta$ are explored by $\alpha$ between round $T_{\alpha,L} + 5L$ and $T_{\alpha,L} + 6L$.
    
    Consider agent $\beta$ during the same rounds. We know that $\beta$ started its discovery phase of parameter $L$ in round $T_{\beta,L} = T_{\alpha,L} + \tau \leq T_{\alpha,L} + L$. Therefore, between round $T_{\alpha,L} + 5L$ and $T_{\alpha,L} + 6L$, $\beta$ has finished its discovery phase of parameter $L$ and is either staying idle at $v_\beta$ or executing \SearchingWalk. Either way, $\beta$ remains within distance $(2R_{\beta,L}-1) + 8R_{\beta,L} \leq (5/2)R_{\alpha,L}$ from $v_\beta$ in those rounds.
    
    Therefore, between round $T_{\alpha,L}+5L$ and $T_{\alpha,L}+6L$, $\beta$ stays within an interval of nodes while $\alpha$ fully explores that interval of nodes. By \cref{prop:no-position-exchange}, the agents achieve rendezvous by round $T_{\alpha,L} + 6L$.
    The case $R_{\beta,L} > R_{\alpha,L}$ is similar, only introducing an extra $\tau \leq 10D \leq L$ to the number of rounds needed to achieve rendezvous.  
\end{proof}

\subsubsection{Common Ruling Set Rendezvous}
\label{sec:common-set-rdv}

The only case that remains to analyze is that where $R_{\alpha,L} = R_{\beta,L}$. We first show that if the agents go to the same node from the ruling set, they easily achieve rendezvous.
Otherwise, they achieve rendezvous as they perform successive calls to \ZWalk\ according to the (distinct) colors given to $r_{\alpha,L}$ and $r_{\beta,L}$ by \EarlyStopPathRulingSet.

\begin{figure}[h]
        \centering
        \includegraphics[page=2,width=0.8\textwidth]{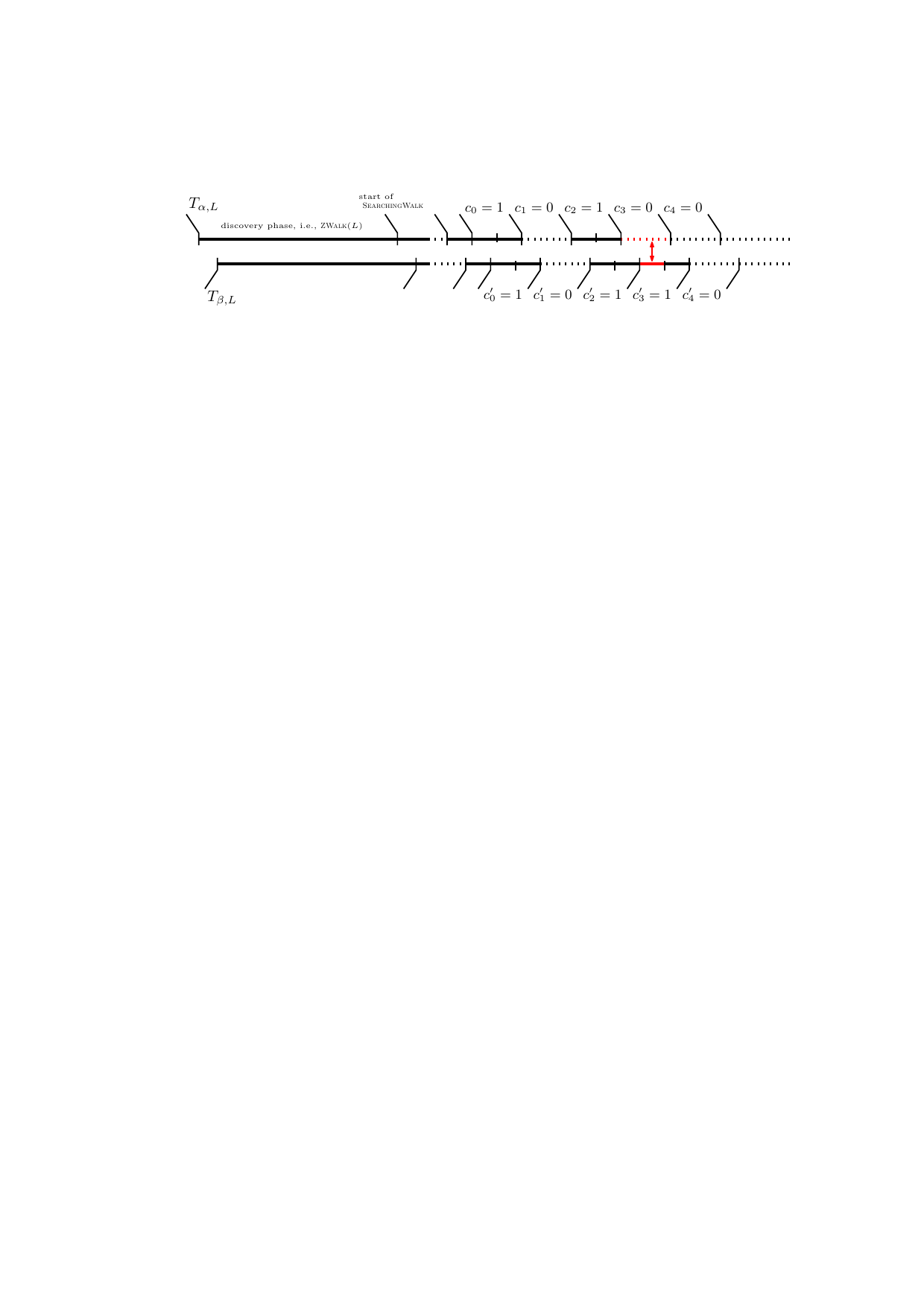}
        \caption{Timeline of an execution of \SearchingWalk\ by the two agents, with colors $c=\mathtt{10100}$ and $c' = \mathtt{10110}$. Horizontal solid lines represent movement (mainly, calls to \ZWalk) while dotted lines represent waiting. Rendezvous occurs during the processing of bit $i=3$.}
        \label{fig:searching-walk}
\end{figure}

\begin{lemma}
\label{lem:same-ruling-node-rendezvous}
    Suppose that $R_{\alpha,L} = R_{\beta,L} \geq D$ and $r_{\alpha,L} = r_{\beta,L}$. Then $\Trdv \leq T_{\alpha,L} + 5L \in O(L)$.
\end{lemma}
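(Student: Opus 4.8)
The plan is to reduce to the near-synchronous regime and then exploit that at the very start of \SearchingWalk\ each agent first walks to its chosen ruling-set node and then waits there for a long time; since both agents pick the \emph{same} node $r := r_{\alpha,L} = r_{\beta,L}$, these ``waiting at $r$'' windows will overlap and rendezvous happens there, before either agent begins its first \ZWalk\ and before any color-dependent behavior could matter.

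First I would assume $\tau \le 10D$: otherwise \cref{lem:out-of-sync-rendezvous} (applied with $L' = 2^{\ceil{\log D}+1}\le 4D$) already gives $\Trdv \in O(D)$, and since $R_{\alpha,L}\ge D$ forces $L\ge 16D$ while $T_{\alpha,L}=28(L-1)$ by \cref{prop:phase-runtimes}, this is comfortably below $T_{\alpha,L}+6L$. Under this assumption $\tau\le 10D\le 10R_{\alpha,L} < L$ (using $R_{\alpha,L}\le L/16$), so in the loop iteration of parameter $L$ both agents take the searching branch (neither $R_{\alpha,L}$ nor $R_{\beta,L}$ is $\bot$) and call \SearchingWalk$(R,L,r,\cdot)$ with the common distance $R := R_{\alpha,L}=R_{\beta,L}$ and the common node $r$; note the color arguments are irrelevant for this lemma, so any disagreement about colors does not matter.

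Next I would just track round indices. Agent $\alpha$ starts \SearchingWalk\ in round $T_{\alpha,L}+4L$; by lines 1--2 of \cref{alg:searching-walk} it walks $\dist(v_\alpha,r)\le 2R-1$ steps to $r$ and then waits, so it occupies $r$ throughout the interval $[T_{\alpha,L}+4L+\dist(v_\alpha,r),\ T_{\alpha,L}+5L]$, i.e.\ right up to (but not into) its first \ZWalk$(8R)$. Using $T_{\beta,L}=T_{\alpha,L}+\tau$ from \cref{prop:phase-runtimes}, agent $\beta$ likewise occupies $r$ throughout $[T_{\alpha,L}+\tau+4L+\dist(v_\beta,r),\ T_{\alpha,L}+\tau+5L]$, with $\dist(v_\beta,r)\le 2R-1$. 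I then check that the later of the two arrival rounds, namely $T_{\alpha,L}+4L+\max(\dist(v_\alpha,r),\,\tau+\dist(v_\beta,r))$, is at most $T_{\alpha,L}+4L+(10D+2R-1)\le T_{\alpha,L}+4L+12R\le T_{\alpha,L}+5L$ (using $D\le R\le L/16$). Hence that round lies in both intervals, both agents sit at $r$ then, and $\Trdv\le T_{\alpha,L}+5L\le T_{\alpha,L}+6L\in O(L)$.

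The single point requiring care is ensuring that the two ``parked at $r$'' windows genuinely overlap, i.e.\ that $\tau+\dist(v_\beta,r)\le L$; this is exactly what the reduction $\tau\le 10D$ together with $L\ge 16R\ge 16D$ provides. Everything else is arithmetic on round indices via \cref{prop:phase-runtimes}, so I do not anticipate further obstacles.
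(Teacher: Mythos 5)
Your proof is correct and takes essentially the same route as the paper's: assume $\tau \le 10D$ via \cref{lem:out-of-sync-rendezvous}, then note that both agents walk to the common node $r$ at the start of \SearchingWalk\ and park there, with the waiting windows overlapping because $\tau + \dist(v_\beta,r) \le 10D + 2R - 1 \le 12R \le L$. If anything, your explicit arithmetic makes the overlap of the two waiting intervals slightly more transparent than the paper's write-up.
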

\begin{proof}
    As before we assume that $\tau \leq 10D$, since \cref{lem:out-of-sync-rendezvous} otherwise already yields the claim. As $D \leq R_{\alpha,L}$, and $R_{\alpha,L} \leq L/16$, this implies $\tau \leq 5L/8$.
    
    The agents both take at most $2R_{\alpha,L}-1 \leq L/8$ rounds from the start of their respective \SearchingWalk\ to reach the node $r_{\alpha,L} = r_{\beta,L}$. Therefore, between rounds $T_{\alpha,L}+4L+L/8$ and $T_{\alpha,L}+5L$, $\alpha$ is at $r_{\alpha,L} = r_{\beta,L}$.
    $\beta$ reaches this node by round $T_{\beta,L}+4L+L/8 \leq T_{\alpha,L}+\tau + 4L+L/8 \leq T_{\alpha,L} + 4L +3L/4$ at the latest. Thus the agents necessarily meet at $r_{\alpha,L} = r_{\beta,L}$ during the beginning of \SearchingWalk\ before round $T_{\alpha,L} + 5L$.
\end{proof}

Finally, we tackle the last case of our analysis: the agents have computed nodes from the same ruling set, and they achieve rendezvous using the colors that were assigned to these nodes by \EarlyStopPathRulingSet.

\begin{lemma}
\label{lem:distinct-nodes-same-set-rendezvous}
    Suppose that $R_{\alpha,L} = R_{\beta,L} \geq D$ and $r_{\alpha,L} \neq r_{\beta,L}$. Then $\Trdv \leq T_{\alpha,L} + 28L \in O(L)$.
\end{lemma}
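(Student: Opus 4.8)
The plan is to first dispose of large delays: if $\tau > 10D$ then \cref{lem:out-of-sync-rendezvous} already finishes the argument, so I work under $\tau \le 10D$. Write $R := R_{\alpha,L} = R_{\beta,L}$; by hypothesis $R \ge D$, so $\tau \le 10D \le 10R < 32R$, an inequality that will be the linchpin of the timing analysis. The first substantive step is to show $c_{\alpha,L} \ne c_{\beta,L}$. Both agents run $\EarlyStopPathRulingSet(R)$ with the same parameter $R$ on the same labeled line, and since that algorithm's output is a deterministic function of the line, $c_{\alpha,L} = \chi(r_{\alpha,L})$ and $c_{\beta,L} = \chi(r_{\beta,L})$ are the committed colors of one common coloring $\chi$ of one common ruling set $S$. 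Now $r_{\alpha,L}$ lies within distance $2R-1$ of $v_\alpha$ and $r_{\beta,L}$ within distance $2R-1$ of $v_\beta$, while $\dist(v_\alpha,v_\beta) = D \le R$, so $\dist(r_{\alpha,L},r_{\beta,L}) \le 4R + D - 2 \le 5R-2 < 9R-1$. Since $r_{\alpha,L} \ne r_{\beta,L}$, the two nodes are adjacent in $G^{9R-1}[S]$, hence receive distinct colors by \cref{lem:earlystoppathrulingset}. So there is a bit position $i \in \set{0,1,2,3,4}$ where the $5$-bit encodings of $c_{\alpha,L}$ and $c_{\beta,L}$ disagree; let $\gamma$ be the agent whose $i$-th color bit is $1$ and $\delta$ the other.

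Next I would lay out the $\SearchingWalk$ timeline (\cref{alg:searching-walk}) for both agents. By \cref{prop:phase-runtimes}, agent $\alpha$ is at $r_{\alpha,L}$ at round $T_{\alpha,L}+5L$ (the $4L$-round discovery phase plus the $L$-round move-and-wait prefix of $\SearchingWalk$), then spends $32R$ rounds on the pre-loop $\ZWalk(8R)$ and $64R$ rounds on each of the five bit-iterations; thus it processes bit $i$ during the window $W_\alpha := [\,T_{\alpha,L}+5L+32R+64iR,\ T_{\alpha,L}+5L+32R+64(i+1)R\,]$, and since $\beta$ runs the same algorithm with $T_{\beta,L}=T_{\alpha,L}+\tau$, agent $\beta$ processes bit $i$ during $W_\beta := W_\alpha+\tau$. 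During its bit-$i$ window, $\gamma$ performs two consecutive $\ZWalk(8R)$'s starting and ending at $r_{\gamma,L}$, while $\delta$ simply waits at $r_{\delta,L}$ for all $64R$ rounds of its window. Because the two windows are offset by exactly $\tau \le 32R$, one of $\gamma$'s two $\ZWalk(8R)$ executions — its second if $\gamma$ reaches bit $i$ first, its first otherwise — sits entirely inside $W_\delta$; during that $32R$-round sub-window, $\gamma$ sweeps the whole interval $[r_{\gamma,L}-8R,\,r_{\gamma,L}+8R]$, which contains $r_{\delta,L}$ since $\dist(r_{\alpha,L},r_{\beta,L}) \le 5R-2 < 8R$, while $\delta$ stays put at $r_{\delta,L}$. \Cref{prop:no-position-exchange} (the scenario where one agent scans an interval the other stays inside) then forces a rendezvous. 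Counting rounds, this sub-window ends by $T_{\alpha,L}+\tau+5L+32R+64\cdot 5\cdot R \le T_{\alpha,L}+5L+352R+\tau \le T_{\alpha,L}+28L$, using $R \le L/16$ and $\tau \le 10D \le 10R$, so $\Trdv \le T_{\alpha,L}+28L$.

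The bulk of the work, and the only place that really needs care, is the synchronization bookkeeping in the middle step: after the $\tau$-round offset between the agents' copies of $\SearchingWalk$, one must be sure that a complete $\ZWalk(8R)$ of the ``active'' agent still fits inside the ``idle'' agent's $64R$-round wait on bit $i$ — this is exactly $\tau < 32R$, which holds thanks to $R \ge D$ (guaranteed once $L$ is large by \cref{lem:guaranteed-ruling-set}) together with $\tau \le 10D$. A secondary point worth spelling out is that $r_{\alpha,L}$ and $r_{\beta,L}$, although discovered by the two agents independently, are nodes of the same globally-determined ruling set carrying the same globally-determined coloring, so that ``their colors differ'' is a single meaningful assertion to which \cref{lem:earlystoppathrulingset} applies; everything else is the routine round-counting sketched above.
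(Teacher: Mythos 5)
Your proof is correct and follows essentially the same route as the paper's: reduce to $\tau \le 10D \le 10R$, derive $c_{\alpha,L}\neq c_{\beta,L}$ from $\dist(r_{\alpha,L},r_{\beta,L})\le 5R-2 < 9R-1$ and \cref{lem:earlystoppathrulingset}, and then show via the $\tau \le 32R$ offset that one full $\ZWalk(8R)$ of the bit-$1$ agent lies inside the bit-$0$ agent's $64R$-round wait, concluding with \cref{prop:no-position-exchange}. Your write-up is in fact slightly more explicit than the paper's on the two symmetric cases (which agent holds the $1$-bit) and on the arithmetic giving the $T_{\alpha,L}+28L$ bound.
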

\begin{proof}
    Again let $\tau \leq 10D$ (\cref{lem:out-of-sync-rendezvous}), and note that $\tau \leq 10R_{\alpha,L} \leq L$.

    Consider $c_{\alpha,L}$ and $c_{\beta,L}$ the colors assigned to the nodes $r_{\alpha,L}$ and $r_{\beta,L}$ by the execution of $\EarlyStopPathRulingSet(R_{\alpha,L})$.
    As we have that $\dist(v_\alpha,r_{\alpha,L}) \leq 2R_{\alpha,L}-1$, $\dist(v_\beta,r_{\beta,L}) \leq 2R_{\alpha,L}-1$, and $\dist(v_\alpha,v_\beta) \leq D \leq R_{\alpha,L}$, it holds that nodes $r_{\alpha,L}$ and $r_{\beta,L}$ are at distance $\dist(r_{\alpha,L},r_{\beta,L}) \leq 5R_{\alpha,L} - 2$.
    Algorithm $\EarlyStopPathRulingSet(R)$ computes a colored $(R,R-1)$-ruling set such that nodes of the ruling set get distinct colors when at distance $9R-1$ or less from one another (\cref{lem:earlystoppathrulingset}). Therefore, $c_{\alpha,L} \neq c_{\beta,L}$.

    Let $i$ be the first bit where $c_{\alpha,L}$ and $c_{\beta,L}$ differ. Let us assume that $\alpha$ has a $1$ at index $i$ of its color while $\beta$ has a $0$, the opposite case being similar.
    Between round $T_{\alpha,L}+5L + (2i+1)\cdot 32 R_{\alpha,L}$ and $T_{\alpha,L}+5L + (2i+3)\cdot 32R_{\alpha,L}$, agent $\alpha$ is executing $\ZWalk(8 R_{\alpha,L})$ twice in succession. 
    In particular, $\alpha$ scans the entire distance-$8R_{\alpha,L}$ neighborhood of $r_{\alpha,L}$ between round $T_{\alpha,L}+5L + (2i+2)\cdot 32 R_{\alpha,L}$ and $T_{\alpha,L}+5L + (2i+3)\cdot 32 R_{\alpha,L}$.
    Recall that $\tau \leq 10D \leq 10 R_{\alpha,L}$ and $T_{\beta,L} = T_{\alpha,L} + \tau$. Agent $\beta$ is staying at $r_{\beta,L}$ for $64 R_{\alpha,L}$ rounds starting from round $T_{\beta,L}+5L + (2i+1)\cdot 32 R_{\alpha,L} \leq T_{\alpha,L}+5L + (2i+1)\cdot 32 R_{\alpha,L} + 10 R_{\alpha,L}$. Therefore, $\alpha$ necessarily finds $\beta$ during its second \ZWalk\ for bit $i$, by \cref{prop:no-position-exchange}.
    See \cref{fig:searching-walk} for an illustration with $i=3$ as first index where their colors differ.
    The bound on $\Trdv$ simply follows from the runtime of \SearchingWalk.
\end{proof}

\subsection{Full Analysis}
\label{sec:algorithm-analysis}

We now have analyzed all the cases necessary to show \cref{thm:main-result}.

\MainResultThm*

\begin{proof}
    Recall that $T_{\alpha,L} \in O(L)$ (\cref{prop:phase-runtimes}).
    If the delay $\tau$ between the startup times of the two agents is $10D$ or larger, by \cref{lem:out-of-sync-rendezvous}, the agents reach rendezvous in $T_{\alpha,O(D)} + O(D) \in O(D)$ rounds ($\Trdv \in O(D)$).
    We assume a smaller delay $\tau$ in what follows.

    By \cref{lem:guaranteed-ruling-set}, we know that $R_{\alpha,L}\geq D$ and $R_{\beta,L}\geq D$ for $L > \Theta(D \log^* \lmin)$.
    For $L$ a large enough $\Theta(D \log^* \lmin)$, if during the loop iteration of parameter $L$ the agents settle on different values $R_{\alpha,L} \neq R_{\beta,L}$ for their ruling sets' distances, they achieve rendezvous before round $T_{\alpha,L} + O(L) = O(L) = O(D \log^* \lmin)$ by \cref{lem:mismatched-rendezvous}.
    If they have the same value $R_{\alpha,L} = R_{\beta,L}$, then they also achieve rendezvous before round $T_{\alpha,L} + O(L) = O(L) = O(D \log^* \lmin)$, whether they go to the same ruling set node (\cref{lem:same-ruling-node-rendezvous}) or different ones (\cref{lem:distinct-nodes-same-set-rendezvous}).
\end{proof}

\section{Corollary for Finite Graphs}
\label{sec:other-results}

In this section, we give simple extensions of our result for the infinite line to analogous path-like graphs like finite paths and cycles.
Adapting our algorithm to also handle these graphs is only the matter of introducing some simple new behavior when an agent detects that the graph is finite (e.g., when it finds a node of degree $1$ for a finite path). 
We prove \cref{thm:finite-graph-main}.

\FiniteGraphThm*

\begin{proof}
We do the following adjustment to our algorithm for the infinite line:
\begin{enumerate}
    \item While an agent has neither seen a node of degree $1$ (the end of a finite path), or reached the same node from two distinct edge-disjoint paths from its starting position (indicating a cycle), it simply executes the algorithm for the infinite line.
    \item If an agent finds itself at a node of degree $1$, it walks straight in the direction away from that node (using \CarefulWalk)
    \item If an agent detected that it is in a cycle, it goes to the node of minimum label in the graph, and waits there.
\end{enumerate}

Suppose $n \in O(D \log^* \lmin)$. After $\Theta(\min(n, D \log^* \lmin))$ rounds, if rendezvous has not yet occurred, the two agents have discovered the entire graph.
In particular, they are both awake.
For the finite line, this already implies that the agents achieved rendezvous by \cref{prop:no-position-exchange}.
For the cycle, the agents are guaranteed to meet once they have both reached the node of minimum ID, which must occur after $\Theta(\min(n, D \log^* \lmin))$ rounds (both agents must be awake by the time one of the two agents detects the cycle).

If $n \geq c\cdot D \log^* \lmin$ for a sufficiently large constant $c$, then the part of the graph that is explored by the two agents in $(c/8)\cdot D \log^* \lmin$ is indistinguishable from a segment from an infinite line.
In that case, the agents meet in $O(D \log^* \lmin)$ rounds from our analysis for the infinite line.
\end{proof}

On finite lines and the infinite half-line, the time to rendezvous also decreases with the distance of the agents to the nearest endpoint (node of degree $1$) of the line.

\begin{remark}
\label{rem:endpoint-distance}
    On the finite labeled line and the infinite labeled half-line, rendezvous can be achieved in $O(\min(d+D, D \log^* \lmin))$ rounds, where $d = \min_{u \in \deg^{-1}(1)} \min_{v\in\set{v_\alpha,v_\beta}} \dist(u,v)$ is the minimum distance between the agents' starting locations and the graph's endpoint(s).
\end{remark}

\section*{Acknowledgments}
We thank Sebastian Brandt for interesting discussions and anonymous reviewers for their valuable feedback. This research was conducted while Alexandre Nolin was at the CISPA Helmholtz Center for Information Security.

\newcommand{\etalchar}[1]{$^{#1}$}

\end{document}